\newtheorem{thm}{Theorem}[section]
\newtheorem{cor}[thm]{Corollary}
\newtheorem{lem}[thm]{Lemma}
\newtheorem{conj}[thm]{Conjecture}
\newtheorem{ques}[thm]{Question}
\tikzstyle{every picture} = [>=latex]
\newtheorem{defn}[thm]{Definition}
\def\ca#1{\mathcal{#1}}
\def\symdiff{\mathop{\bigtriangleup}}
\theoremstyle{definition}
\theoremstyle{remark}
\newtheorem{rem}[thm]{Remark}
\title{A New Perspective on FO Model Checking of Dense Graph Classes\footnote{
Jakub Gajarsk\'y has been supported by the
European Research Council (ERC)~ under the European Union's Horizon 2020 research and innovation programme (ERC Consolidator Grant
DISTRUCT, grant agreement No 648527).
J.~Gajarsk\'y (till 2016), P.~Hlin\v{e}n\'y and J.~Obdr\v{z}\'alek
acknowledge support by the Czech Science Foundation, project no.~14-03501S and currently no.~17-00837S.
Daniel Lokshtanov is supported by Pareto-Optimal Parameterized Algorithms, ERC~ Starting Grant 715744.
M.~S.~Ramanujan acknowledges support from the~Bergen Research Foundation, project~BeHard – Beating Hardness by preprocessing and from Austrian Science Fund (FWF), project P26696 X-TRACT.}
}
\author{
Jakub Gajarsk\'y\thanks{Technical University Berlin. E-mail: {\tt jakub.gajarsky@tu-berlin.de}.}
\and
Petr Hlin\v{e}n\'y\thanks{Masaryk University. E-mail: {\tt hlineny@fi.muni.cz}.}
\and
Daniel Lokshtanov\thanks{University of Bergen. E-mail: {\tt daniello@ii.uib.no}.}
\and 
Jan Obdr\v{z}\'alek\footnotemark[3]
\and
M.S. Ramanujan\thanks{University of Warwick. E-mail: {\tt R.Maadapuzhi-Sridharan@warwick.ac.uk}.}
}
\begin{document}

\maketitle

\begin{abstract}
We study the first-order (FO) model checking problem of dense graphs, namely
those which have FO interpretations in (or are FO transductions of) 
some sparse graph classes.  
We give a structural characterization of the graph classes which
are FO interpretable in graphs of bounded degree. 
This characterization allows us to efficiently compute such an 
FO interpretation for an input graph. 
As a consequence, we obtain an FPT algorithm for successor-invariant 
FO model checking of any graph class which is FO interpretable in
(or an FO transduction of) a graph class of bounded degree. 
The approach we use to obtain these results may also be of independent interest.
\end{abstract}






\section{Introduction}
\label{sec:introduction}
Algorithmic metatheorems are theorems stating that all problems expressible
in a certain logic are efficiently solvable on certain classes of
(relational) structures, e.g.\ on finite graphs. 
Note that the model checking problem for {\em first-order logic} -- 
given a graph $G$ and an FO formula $\phi$ we want to decide whether $G$
satisfies $\phi$ (written as $G\models\phi$) -- is trivially solvable
in time $|V(G)|^{\mathcal{O}(|\phi|)}$.
``Efficient solvability'' hence in this context often means
{\em fixed-parameter tractability} (FPT);
that is, solvability in time $f(|\phi|)\cdot|V(G)|^{\mathcal{O}(1)}$ 
for some computable function $f$.  

In the past two decades
algorithmic metatheorems for FO logic on sparse graph classes
received considerable attention.  After the result of Seese~\cite{Seese96}
establishing fixed-parameter tractability of FO model checking on graphs of
bounded degree there followed a series of results~\cite{FrickG01, DawarGK07,
  DvorakKT10} establishing the same result for increasingly rich sparse
graph classes. This line of research culminated in the result of Grohe,
Kreutzer and Siebertz~\cite{gks14}, who proved that FO model checking is FPT
on nowhere dense graph classes.

The result of Grohe, Kreutzer and Siebertz~\cite{gks14} is essentially the best
possible of its kind, in the following sense:
If a graph class $\ca D$ is monotone (i.e., closed on taking subgraphs)
and not nowhere dense,
then the FO model checking problem on $\ca D$ is as hard as that on all graphs.
Possible ways to continue the research into algorithmic metatheorems for FO
logic include the following two directions:

First, one can study relational structures other than graphs. 
This line of research has recently been initiated by Bova, Ganian and
Szeider~\cite{bgs14}, who gave an FPT algorithm for existential FO model checking 
on partially ordered sets of bounded size of a maximum antichain. Their result
was first improved upon in~\cite{ghoo14} and shortly after that followed the result of
Gajarský et al.~\cite{gajarskyetal15}, who extended~\cite{bgs14} to full FO. 
Apart from these results, very little is known and it remains to be
seen what other types of structures and their parameterizations admit fast
FO model checking algorithms. 

Second, one may consider metatheorems for FO logic on classes of graphs which
are not sparse. 
Again, little is known along this line of research. 
One can mention the result of Ganian et al.~\cite{GHKOST15} 
establishing that certain subclasses of interval graphs
admit an FPT algorithm for FO model checking.
Besides, the aforementioned result of~\cite{gajarskyetal15} 
can also be seen as a result about dense (albeit directed) graphs,
and \cite{gajarskyetal15} actually happens to imply the result of~\cite{GHKOST15}.

\medskip

We would like to initiate a systematic study of dense graph classes for
which the FO model checking problem is efficiently solvable.  It appears
that a natural way to arrive at new graph classes admitting FPT algorithms
for FO model checking, is by {\em means of interpretation, or transduction}.
In a simplified setting of interpretations
-- given a graph $G$ and an FO formula $\psi(x,y)$ with two free
variables, we can define a graph $H=I_\psi(G)$ on the same vertex set as $G$
and the edge set determined by $\psi(x,y)$: a pair of distinct vertices
$u,v$ is an edge of $H$ iff $G \models \psi(u,v)\vee\psi(v,u)$.
We then say that $H$ is interpreted in $G$ using~$\psi$.  
A graph class $\mathcal{D}$ is FO interpretable in a graph class $\mathcal{C}$ if there
exists an FO formula $\psi(x,y)$ such that every member of $\mathcal{D}$ is
interpreted in some member of $\mathcal{C}$ using $\psi$.

For now let us assume we have an efficient FO model checking algorithm for the
previous class $\mathcal{C}$, and
consider the FO model checking problem of the class $\mathcal{D}$.
If an input graph from $\mathcal{D}$ was given together with the corresponding
FO interpretation in a graph from $\mathcal{C}$, then one could easily solve the 
model checking problem using the existing algorithm for $\mathcal{C}$.
This is based on the following natural property of interpretations:
if $H \in \mathcal{D}$ is interpreted in $G \in \mathcal{C}$ 
using formula $\psi(x,y)$,
and our question is to decide whether $H \models \phi$, 
it is a standard routine to construct from $\phi$ and $\psi$ 
a sentence $\phi'$ such that $H \models \phi$ if and only if $G \models \phi'$.
Then $G \models \phi'$ is decided by the algorithm given for~$\mathcal{C}$.

However, if the assumed interpretation (or transduction) is not given, 
then the situation is markedly harder.
In this context we ask the following question:
\begin{ques}
\label{question1}
Let $\mathcal{C}$ be a graph class admitting an FPT algorithm for 
FO model checking, and
$\mathcal{D}$ be a graph class FO interpretable in $\mathcal{C}$.  
Does there exist an FPT algorithm for FO model checking on $\mathcal{D}$?
\end{ques}

As outlined above, the difficulty of this question 
lies in the fact that our inputs come from $\mathcal{D}$, 
without any reference to the respective members of $\mathcal{C}$ in which they are interpreted.
Even if the interpretation formula $\psi(x,y)$ is fixed and known
beforehand, we have generally no efficient way of obtaining the respective
member $G \in \mathcal{C}$ for an input $H \in \mathcal{D}$.
Thus, Question~\ref{question1} can be reduced to the following:

\begin{ques}
\label{question2}
Let $\mathcal{C},\mathcal{D}$ be graph classes such that
$\mathcal{D}$ is FO interpretable in~$\mathcal{C}$.
Does there exist an integer $s$ and a polynomial-time algorithm $\ca A$ such that;
given $H\in \mathcal{D}$ as input, $\ca A$ outputs $G \in \mathcal{C}$
and an FO formula $\psi(x,y)$ of size at most~$s$
such that $H$ is interpreted in $G$ using~$\psi\,$?
\end{ques}

An answer to Question~\ref{question2} is far from being obvious,
and it can strongly depend on the choice of~$\psi$.
Take, for example, the following particular FO interpretation:
A graph $H$ is the {\em square} of a graph $G$ if the edges of $H$ are those
pairs of vertices which are at distance at most $2$ in~$G$.
Then the problem; given $H$ find $G$ such that $H$ is the square of $G$,
is NP-hard~\cite{mms94}.
Another such negative example, specifically tailored to our setting,
is discussed in Section~\ref{sec:hardness}.
These examples show that it is important to choose a suitable
interpretation formula $\psi$ (avoiding the hard cases) 
in an attempt to answer Question~\ref{question2}.

\paragraph{Our contribution} 
We answer both Questions~\ref{question1} and~\ref{question2} in the positive for
the case when $\mathcal{C}$ is a class of graphs of bounded degree.
Our answers cover also the more general case of FO transductions
of bounded-degree classes, and include 
checking successor-invariant FO properties in addition to ordinary FO ones.

We first define near-uniform graph classes (Definition~\ref{def:near-uniform}),
based on a new notion of near-$k$-twin relation, which generalizes the folklore
twin-vertex relation and is related also to the neighbourhood diversity
parameter of~\cite{lam10}.  The idea behind this approach is to classify pairs
of vertices which have almost the same adjacency to the rest of the graph.  The
approach seems promising and may be of independent use in further investigation
of well structured dense graph classes. While the definition of non-uniformity
lends itself well to being used in proofs, it is sometimes unnecessarily technical to
reason about. We therefore also introduce an equivalent notion of near-covered
graph classes (Definition~\ref{def:near-covered}), which is more intuitive,
easier to grasp and offers a slightly different perspective.

We then give an efficient FO model checking algorithm
(Theorem~\ref{thm:MCalgo}) for the near-uniform graph classes. 
This algorithm is based upon the above idea of interpretation;
briefly, given a graph $H$ we use the near-$k$-twin relation for a suitable
value of $k$ to partition the vertex set of $H$
and to find a bounded degree graph $G$, such that $H$ is interpreted in $G$
using a universal formula $\psi$ depending only on the class in question
(Theorem~\ref{thm:decomposition}).
Then we employ the aforementioned algorithm of Seese~\cite{Seese96}.
Furthermore, we extend our algorithm to include also stronger
successor-invariant FO properties (see Section~\ref{subsec:successor} for
more details), for which we can use the recent result of~\cite{hkpqrs17}.

In the second half of the paper we argue that the concept of near-uniform
graph classes is robust and sufficiently rich in content.
We prove that the near-covered (and therefore also near-uniform, since the two are equivalent) graph classes are exactly 
those which are FO interpretable in graphs of bounded degree
(Theorem~\ref{thm:characterization})
and, more generally, that any FO transduction of a graph class of bounded
degree is a near-covered graph class (Theorem~\ref{thm:transductiond}).
The key tool we use is Gaifman's theorem~\cite{Gaifman82}.
At this place we remark that properties of graphs which are FO
interpretable in graphs of bounded degree have already been studied, e.g.,
by Dong, Libkin and Wong
in~\cite{dlw97} in a different context,
but those previous results do not imply our conclusions.

We then complement the previous tractability results with a negative
example of a particular FO interpretation which is NP-hard to ``reverse''
even on the class of graphs of degree at most~$3$ 
(Theorem~\ref{thm:recognizeinthard}).
We finish by sketching some interesting open directions for future research.

\section{Definitions and preliminaries}
\label{sec:interpretation}
We begin by clarifying the terminology and recalling
some established concepts concerning logic on graphs.
We assume that $0$ is a natural number, i.e.~$0\in\mathbb N$.
Let $X\symdiff Y$ denote the symmetric difference of two sets.

\paragraph{Graph theory} 
We work with {\em finite simple undirected graphs}
and use standard graph theoretic notation.
We refer to the vertex set of a graph $G$ as to $V(G)$
and to its edge set as to~$E(G)$.
As it is common in the context of FO logic on graphs,
vertices of our graphs can carry arbitrary labels.

\paragraph{FO logic}
The {\em first-order logic of graphs} (abbreviated as FO) applies the
standard language of first-order logic to a graph $G$ viewed as a relational 
structure with the domain $V(G)$ and the single binary (symmetric) relation $E(G)$.
That is, in FO we have got the standard predicate $x=y$, a binary predicate
$edge(x,y)$ with the meaning $\{x, y\}\in E(G)$, 
an arbitrary number of unary predicates $L(x)$ with the meaning that $x$
holds the label~$L$,
usual logical connectives $\wedge,\vee,\to$, and quantifiers
$\forall x$, $\exists x$ over the vertex set $V(G)$.

For example, $\phi(x,y)\equiv \exists z\big(edge(x,z)\wedge edge(y,z)
	\wedge red(z)\big)$
states that the vertices $x,y$ have a common neighbour in~$G$
which has got label `red'.

\paragraph{Parameterized model checking}
The instances of a parameterized problem can be considered as pairs
$\langle I,k\rangle$ where $I$ is the {main part} of the instance and $k$ is
the \emph{parameter} of the instance; the latter is usually a
non-negative integer.  A parameterized problem is
\emph{fixed parameter tractable (FPT)} if instances $\langle I,k\rangle$ 
of size $n$ can be solved in time $O(f(k)\cdot n^c)$ where $f$ is a 
computable function and $c$ is a constant independent of $k$.
In {\em parameterized model checking}, instances are
considered in the form $\langle(G,\phi),|\phi|\rangle$
where $G$ is a structure, $\phi$ a formula, the question is whether
$G\models\phi$ and the parameter is the size of~$\phi$.

When speaking about the FO model checking problem in this paper, 
we implicitly consider the formula $\phi$ (its size) as a parameter.

\paragraph{Interpretations}
In order to simplify our exposition and proofs we work with
a simplified version of FO interpretations
(note, however, this does not impact generality of our conclusions,
as we will see later).

Let $\psi(x,y)$ be an FO formula with two free variables over the language
of (possibly labelled) graphs such that for any graph and any $u,v$ it holds
that $G \models \psi(u,v) \Leftrightarrow G \models \psi(v,u)$ and
$G \not\models \psi(u,u)$, i.e. the relation on $V(G)$ defined by the
formula is symmetric and irreflexive. From now on we will assume that
formulas with two free variables are symmetric and irreflexive
(which can easily be enforced). 
Given a graph $G$, the formula
$\psi(x,y)$ maps $G$ to a graph $H = I_{\psi}(G)$ defined by $V(H) = V(G)$ and
$E(H) = \{\{u,v\}~|~G \models \psi(u,v) \}$.  We then say that the graph $H$ is
\emph{interpreted} in $G$. Notice that even though the graph $G$ can be
labelled, our graph $H$ is not. 
This is to simplify our notation -- nevertheless, 
one may easily inherit labels from $G$ to $H$ if needed.

In the rest of the paper, whenever we consider
graphs $G$ and $H$ in context of interpretations, graph $G$ will be the
graph in which we are interpreting, and graph $H$ will be the ``result''
of the interpretation.

The notion of interpretation can be extended to graph classes -- to a graph
class $\mathcal{C}$ the formula $\psi(x,y)$ assigns the graph class
$\mathcal{D} = I_{\psi}(\mathcal{C}) = \{H\>|~H=I_{\psi}(G),\, G \in
\mathcal{C}\}$.
We say that a graph class $\mathcal{D}$ is \emph{interpretable} in a graph class
$\mathcal{C}$ if there exists formula $\psi(x,y)$ such that
$\mathcal{D} \subseteq I_{\psi}(\mathcal{C})$.  
Note that
we do not require $\mathcal{D} = I_{\psi}(\mathcal{C})$, as we just want every
graph from $\mathcal{D}$ to have a preimage in $\mathcal{C}$.

Interpretations are useful for defining new graphs from old using logic 
(again, we think of $H$ as a result of application of $\psi$ to $G$),
but can also be used to evaluate formulas on $H$ quickly, provided
that we have a fast algorithm to evaluate formulas on $G$. Let $H = I_{\psi}(G)$,
let $\theta$ be a sentence and let $\theta'$ be a sentence obtained from
$\theta$ by replacing every occurrence of the atom ${edge}(x,y)$ by $\psi(x,y)$. 
Then, obviously, $H \models \theta \Longleftrightarrow G \models \theta'$. 

\paragraph{FO transductions}
While interpretations are restricted in a choice of the target domain
(and, in our case, we even require $V(H) = V(G)$\,),
a more general view is provided by so called transductions,
see Courcelle and Engelfriet~\cite{ce12}.
Informally, in addition to an interpretation this allows
to add to a graph arbitrary ``parameters'' (as labels) and to make
several disjoint copies of the graph.

Here we provide a brief definition based on~\cite{bc10},
simplified to target only the FO graph case.
A {\em basic FO-transduction} $\tau_0$ is a triple $(\chi,\nu,\mu)$
of FO formulas with 0, 1 and 2 free variables, respectively,
such that $\tau_0$ maps a graph $G$ into a graph on the vertex
set $\{v\>|~G\models\nu(v)\}$ and the edge set
$\{\{u,v\}~|~G \models \mu(u,v) \}$ (an induced subgraph of $I_\mu(G)$),
or $\tau_0(G)$ is undefined if $G\not\models\chi$.

The {\em$m$-copy operation} maps a graph $G$ to the graph $G^m$ such that
$V(G^m)=V(G)\times\{1,\dots,m\}$, the subset $V(G)\times\{i\}$ for
each $i=1,2\dots,m$ induces a copy of~$G$ (there are no edges between distinct
copies), and $V(G^m)$ is additionally equipped with a binary relation
$\sim$ and unary relations $Q_1,\dots,Q_m$ such that;
$(u,i)\sim(v,j)$ for $u,v\in V(G)$ iff $u=v$, and $Q_i=\{(v,i): v\in V(G)\}$.
The {\em$p$-parameter expansion} maps a graph $G$ to the set of all
graphs which result by expansion of $V(G)$ by $p$ unary predicates.

Altogether, a many-valued map $\tau$ is an {\em FO transduction}
(of simple undirected graphs) if it is
$\tau=\tau_0\circ\gamma\circ\varepsilon$ where $\tau_0$ is a basic
transduction, $\gamma$ is a $m$-copy operation for some $m$,
and $\varepsilon$ is a $p$-parameter expansion for some~$p$.
Note that, in this formal setting, the formulas of $\tau_0$ 
may also refer to the relations $\sim$ and $Q_i$ established by 
the copy operation~$\gamma$.

We remark, once again, that the result of a transduction $\tau$ of one graph is
generally a set of graphs, due to the involved $p$-parameter expansion.
For a graph class $\mathcal{C}$, the result of a {\em transduction $\tau$ of
the class $\mathcal{C}$} is the union of the particular transduction results,
precisely, $\tau(\mathcal{C}):=\bigcup_{G\in\mathcal{C}}\tau(G)$.


\section{Outline of our approach}

Before diving into technical details of our claims and proofs, we give
a brief exposition of ideas leading to our results. 
We start by explaining the core ideas behind our approach to analysing dense graphs and then
we sketch the how interpretations are combined with our approach to dense graphs to obtain the
results presented in Sections~\ref{sec:near-uniform} and \ref{sec:interpretability}.

\subsection{Locality, indistinguishability, and the new approach}

The existing FPT algorithms for FO model checking of sparse graph classes we
mentioned at the beginning of Section~\ref{sec:introduction} rely heavily on
the use of locality of FO logic -- i.e. the fact that evaluating FO formulas can be
reduced to evaluating \emph{local} FO formulas (cf.  Gaifman's
theorem~\cite{Gaifman82}, also in Section~\ref{sec:interpretability}).  This,
together with the fact that in sparse graphs it is possible to evaluate local
formulas efficiently, made the locality-based approach suitable for studying FO
logic on sparse graphs. The problem with using this approach for dense graphs
is obvious -- in a dense graph the whole graph can be in the 1-neighbourhood of
a single vertex\footnote{This is also true for some sparse graphs, say stars,
  but we hope that it is clear that for dense graphs this can cause substantial
  problems.}. This makes evaluating local formulas around such a vertex
expensive (from the FPT perspective), because this amounts to evaluating them
on the whole graph.

An alternative approach to FO model checking, as described in
Section~\ref{sec:near-uniform}, is based on the concept of
vertex indistinguishability. This approach can be used for dense graphs, but is a bit too limited in its scope.
The key notion here is that of twin vertices -- two vertices of a graph $G$
are \emph{twins} if they have the same neighbourhood. 
The fact that two vertices $u,v$ are twins means that they behave in the same way
with respect to any other vertex in a graph.
Consequently, no FO formula can distinguish between $u$ and $v$.
It is not hard to see that the twin relation is an equivalence on the vertex set of a graph. 
One may also say that the set of vertex neighbourhoods occurring in $G$
is ``covered'' by the set of neighbourhoods of representatives of each twin class of~$G$.
The number of equivalence classes of the twin relation is called 
the \emph{neighbourhood diversity}~\cite{lam10} of a graph,
and graph classes of bounded neighbourhood diversity admit a very simple FPT algorithm
for FO model checking. However, as already mentioned, the problem with this approach is that it is too restrictive --
even such simple graph classes as paths have unbounded neighbourhood diversity. 

Our approach is based on observing that the locality-based approach, when used
on sparse graphs, exploits, in its essence, the indistinguishability of
vertices. Take, for example, the graphs of bounded degree. Here any two
vertices behave the same way with respect to the rest of the vertex set (they
are non-adjacent to it), with only a few exceptions (the vertices in their
neighbourhood). In other words, any two vertices have \emph{almost} the same
neighbourhood. This leads to a relaxation of the notion of twin vertices.  We
say that two vertices are \emph{near-$k$-twins} if their neighbourhoods differ
in at most $k$ vertices. To see how this notion works around the issues with
locality and indistinguishability explained above, let us consider the
near-$k$-twin relation on the class $\mathcal{D}_d$ of graphs of degree at most
$d$ and on the class $\overline{\mathcal{D}_d}$ of its complements. On every
graph from these graph classes, the near-$2d$-twin relation is an equivalence
with just one class.
Yet, graphs from $\overline{\mathcal{D}_d}$ are dense 
and some of them contain universal vertices.

The above considerations lead us to studying graph classes such that for
each graph from these classes there exists a small $k$ such that the
near-$k$-twin relation is an equivalence with a small number of classes --
the \emph{near-uniform} graph classes.  
Though, unlike the ordinary twin relation, the near-$k$-twin relation is not
automatically guaranteed to be an equivalence (this depends heavily on the
choice of $G$ and $k$)
and, consequently, dealing with near-uniformity is slightly cumbersome
and requires a great care.

However, there is also another (and perhaps simpler to deal with) 
way to view and formally capture the above informal discussion of diversity
of neighbourhoods in a graph -- that one can ``cover'' all distinct
neighbourhoods in the graph with only few representative neighbourhoods.
This view leads to a new definition -- a class of graphs is 
\emph{near-covered} if there exists a small $k$ 
such that every graph in this graph class contains a small
(of a constant size) set $S$ of vertices such that every vertex is a
near-$k$-twin of at least one vertex from~$S$.  
It is easily seen that near-uniformity implies near-coveredness 
-- just pick any one representative from each equivalence class.
As we shall see, the converse is also true and the two notions are
(asymptotically) equivalent.
Precisely, we shall prove that for any graph class $\mathcal{C}$ the following
conditions are equivalent:
\begin{enumerate}
\item $\mathcal{C}$ is near-uniform (Definition~\ref{def:near-uniform});
\item $\mathcal{C}$ is near-covered (Definition~\ref{def:near-covered});
\item $\mathcal{C}$ is interpretable in a class of graphs of bounded degree.
\end{enumerate}


Since we can efficiently compute the interpretation claimed in (3),
we can then solve FO model checking on near-uniform graph classes in FPT
using established tools, such as the algorithm of~\cite{Seese96}      
for FO model checking on graphs of bounded degree.
Our proof is structured as follows; 
we first prove the equivalence between (1) and (2) (Lemma~\ref{lem:unifrom-covered}), 
and then the implications (1) $\Rightarrow$ (3) (Theorem~\ref{thm:decomposition}) and (3) $\Rightarrow$ (2) (Theorem~\ref{thm:characterization}).

One may, with respect to technical difficulties related to the
near-uniformity notion, question whether it is necessary to consider
near-uniformity at all and not to go with just near-coveredness alone.
However, the equivalence aspect of the near-$k$-twin relation
is crucial in proving that graphs with certain properties are
interpretable in graphs of bounded degree. We therefore believe that it
deserves a separate definition.

%

\subsection{Interpretability in graphs of bounded degree}
\label{subsec:interp}


Besides dealing with the FO model checking problem via interpretation
of certain graph classes into classes of bounded degree, we are also interested
in the other direction -- to find out which graph classes can be FO interpreted 
into classes of bounded degree (the direction (3) $\Rightarrow$ (2) above).

Our characterization of such classes relies on a
simple corollary of Gaifman's locality theorem: For a~graph~$G$ and two
vertices $u,v \in V(G)$ which are far apart form each other, the truth value of the
formula $\psi(u,v)$ depends only on formulas with one free variable (up to
the quantifier rank $q$, which depends on $\psi$) valid on $u$ and $v$ (i.e. its
logical $q$-types). This in turn means that when the formula $\psi(u,v)$ is
used for interpretation (to obtain the graph $H$ from a graph $G$ of degree at
most $d$) and vertices $u$ and $u'$ satisfy the same formulas with one free
variable (again, up to the quantifier rank $q$), $u$ and $u'$ will be adjacent to the
same vertices in the resulting graph, except for a small number of vertices
which were in their respective $r$-neighbourhoods in graph $G$ (here $r$
also depends on $\psi(x,y)$). Any two vertices of the same $q$-type will
therefore be near-$k$-twins for $k=2\cdot d^r$. 

While the previous consideration is quite simple, note the following
possible pitfall.
Since the relation ``being of the same $q$-type'' is an equivalence with 
a bounded number of classes, it is tempting to believe that 
the near-$k$-twin relation (for a suitably chosen~$k$) 
is an equivalence with a bounded number of classes (independent of $G$)
for any graph from a graph class FO interpretable in a class of graphs of
bounded degree. This, however, is not true -- 
it can happen that some vertices $u$
and $v$ of different $q$-types can be near-$k$-twins and a vertex $w$ of yet
different $q$-type can be near-$k$-twin of $v$ but not of $u$, thus failing the transitivity.

Instead, we finish as follows.
Since for any $q$ there are finitely many (say $m$) different $q$-types, in $H$
there exist at most $m$ vertices such that every
vertex is a near-$k$-twin of (at least) one of them. This in turn means that graph
classes FO interpretable in graphs of bounded degree are near-covered,
and hence also near-uniform by (1) $\Leftrightarrow$ (2) above.


\section{Near-uniform and near-covered graph classes}
\label{sec:near-uniform}

In this section we formally establish the key concepts.
For a graph $G$ and a vertex $v\in V(G)$, 
we define the {\em neighbourhood} of $v$  
as $N^G(v)=\{w\in V(G) \mid \{v,w\}\in E(G)\}$. 
If the graph $G$ is clear from the context, we write just $N(v)$. 
Note that, by definition, $v\not\in N(v)$.

A useful concept in graph theory is that of twin vertices.
Two vertices $u,v\in V(G)$ are called {\em false twins} if
$N(u)=N(v)$, and they are {\em true twins} if
$N(u)\cup\{u\}=N(v)\cup\{v\}$.
We actually follow the concept of false twins, which better suits our
purposes, in the next definition.

\begin{defn}[near-$k$-twin relation]\label{def:nearktwin}
For a graph $G$ and $k\in\mathbb N$, the \emph{near-$k$-twin relation
of~$G$} is the relation $\rho_k$ on $V(G)$ defined
by $(u,v) \in \rho_k$ $\iff$ \mbox{$|N(u) \symdiff N(v)| \le k$}.
\end{defn}

Considering, e.g., $k$ a small parameter and $G$ a large graph
then, intuitively, two vertices of $G$ are near-$k$-twins if they have
``almost the same'' neighbourhood.
This relation, unlike the ordinary {twin relations} on graph vertices, 
does not always ``behave nicely''; in particular,
$\rho_k$ may not be an equivalence relation (see e.g. the examples below).
On the other hand,
if the near-$k$-twin relation is an equivalence of bounded index,
then we can use it to decompose the vertex set of the graph $G$
into similarly behaving clusters.
This leads to the following.

\begin{defn}[near-uniform]\label{def:near-uniform}
A graph $G$ is \emph{$(k_0,p)$-near-uniform} if
there exists $k\leq k_0$ for which near-$k$-twin
relation of $G$ is an equivalence of index at most~$p$.
\\
A graph class $\mathcal{C}$ is \emph{$(k_0,p)$-near-uniform} if every member
of $\mathcal{C}$ is $(k_0,p)$-near-uniform,
and $\mathcal{C}$ is \emph{near-uniform} if there exist integers $k_0,p$ such
that $\mathcal{C}$ is $(k_0,p)$-near-uniform.
\end{defn}

To simplify the discussion, we use the following as a shorthand.
If $\rho_k$ of Definition~\ref{def:nearktwin} is an equivalence relation,
then we call $\rho_k$ the \emph{near-$k$-twin equivalence} of $G$,
and the equivalence classes of  $\rho_k$ the \emph{near-$k$-twin classes
of~$G$.}

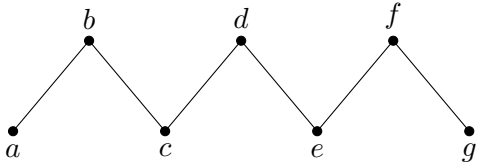
\begin{figure}[t]
$$
\begin{tikzpicture}[yscale=0.6]
\tikzstyle{every node}=[draw, shape=circle, inner sep=1.2pt, fill=black]
\draw (0,0) node[label=below:$a$] {} --
	  (1,2) node[label=above:$b$] {} -- 
	(2,0) node[label=below:$c$] {} --
	  (3,2) node[label=above:$d$] {} -- 
	(4,0) node[label=below:$e$] {} --
	  (5,2) node[label=above:$f$] {} --
	(6,0) node[label=below:$g$] {} ;
\end{tikzpicture}
$$
\caption{An example.
The near-$2$-twin relation $\rho_2$ of this path includes pairs $(b,d)$ and
$(d,f)$ but not $(b,f)$, and so $\rho_2$ is not an equivalence.
On the other hand, $\rho_1$ is an equivalence on this path
and its near-$1$-twin classes are $\{a,c\},\{e,g\},\{b\},\{d\},\{f\}$.}
\label{fig:P6notequiv}
\end{figure}

For example, take a class $\mathcal{D}_d$ of the graphs of maximum degree at
most~$d$, and let $k=2d$.
Then the near-$k$-twin relation $\rho_k$ is a trivial equivalence of index one
(i.e., with one class) for every graph from $\mathcal{D}_d$.
The same holds for the class $\overline{\mathcal{D}}_d$ of the complements of
graphs of~$\mathcal{D}_d$.
Another sort of examples comes, say, with a class $\overline{\mathcal{B}}_d$ of the
graphs obtained from complete bipartite graphs by subtracting a subgraph of
degrees at most~$d$.
For $k=2d$ and every graph of $\overline{\mathcal{B}}_d$, 
the near-$k$-twin relation $\rho_k$ is an equivalence of index at most two.
On the other hand, we can easily see that the near-$2$-twin relation of,
e.g., a path of length~$6$ is not an equivalence; see Figure~\ref{fig:P6notequiv}.
Even more, examples such as that of Figure~\ref{fig:P6notequiv} show
that, having a near-$k$-twin equivalence for some $k$, does not imply that the
near-$k'$-twin relation is an equivalence for $k'>k$.
That is why we cannot simply use one universal value of $k$ in
Definition~\ref{def:near-uniform}.

\medskip 
The fact that the near-$k$-twin relation of a graph $G$ is an equivalence on
$V(G)$ can used as follows: the neighbourhood of a vertex is represented
by the neighbourhood of a selected representative of its class and 
the (small) difference of these two neighbourhoods.
For such purpose of representation it is not always necessary to have
a near-$k$-twin equivalence;
just having \emph{at least} one such representative for every vertex 
of $G$ may be sufficient (we may not care that there are more than one
``close'' representatives).
This simplified scenario leads to the following definition.

\begin{defn}[near-covered]\label{def:near-covered}
A graph $G$ is \emph{$(\ell,q)$-near-covered} if
there exist vertices $v_1, \ldots, v_q$ in $V(G)$ such that each vertex 
$u \in V(G)$ is a near-$\ell$-twin of at least one of $v_1, \ldots, v_q$.
\\
A graph class $\mathcal{C}$ is \emph{$(\ell,q)$-near-covered} if every member
of $\mathcal{C}$ on at least $q$ vertices is $(\ell,q)$-near-covered,
and $\mathcal{C}$ is \emph{near-covered} if there exist integers $\ell,q$ such
that $\mathcal{C}$ is $(\ell,q)$-near-covered.
\end{defn}

The following lemma establishes that the two notions -- being near-uniform and
being near-covered -- are in fact equivalent.  While the definition of being
near-covered is less technical and easier to grasp, the definition of
near-uniformity is more convenient to work with in the algorithmic context of
Section~\ref{sec:algorithm}, which is the main reason for including both
definitions.
\begin{lem}\label{lem:unifrom-covered}
A graph class $\mathcal{C}$ is near-uniform if and only if $\mathcal{C}$ is near-covered.
\end{lem}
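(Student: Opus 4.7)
The plan is to prove the two implications separately. The forward direction is immediate: if $G$ is $(k_0,p)$-near-uniform witnessed by some $k \le k_0$, pick one representative from each of the at most $p$ classes of $\rho_k$; every vertex is a near-$k$-twin (hence a near-$k_0$-twin) of the representative of its class, so $G$ is $(k_0,p)$-near-covered.

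For the backward direction, fix $G \in \mathcal{C}$ that is $(\ell,q)$-near-covered by $v_1,\dots,v_q$. Assign to each vertex $u$ its \emph{type} $T(u) := \{i : u \text{ is a near-}\ell\text{-twin of } v_i\}$, which is a non-empty subset of $\{1,\dots,q\}$. Setting $d_{ij} := |N(v_i)\symdiff N(v_j)|$, the triangle inequality for symmetric difference gives, for any $u,v$ and any $i \in T(u)$, $j \in T(v)$,
\[
    \max\{0,\, d_{ij} - 2\ell\} \;\le\; |N(u)\symdiff N(v)| \;\le\; d_{ij} + 2\ell.
\]
Consequently, the set $D$ of all symmetric-difference sizes appearing between pairs of vertices of $G$ is contained in a union of at most $q^2$ intervals, each of length at most $4\ell$.

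The crux is then a gap-finding step: I seek $k$ for which $(k,2k] \cap D = \emptyset$. Start with $k := 2\ell$. While some interval of the cover of $D$ meets $(k,2k]$, update $k$ to the maximum right endpoint among all such meeting intervals; the new value satisfies $k_{\text{new}} \le 2 k_{\text{old}} + 4\ell$, and every interval that met the old window has its right endpoint at most $k_{\text{new}}$, so it cannot be met again. Thus each iteration consumes at least one new interval, the loop terminates within $q^2$ rounds, and the final $k$ is bounded by $k_0 := O(2^{q^2}\ell)$, a function of $\ell$ and $q$ alone.

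For this $k$ the relation $\rho_k$ is an equivalence: reflexivity and symmetry are immediate, and transitivity holds because $|N(u)\symdiff N(v)|, |N(v)\symdiff N(w)| \le k$ implies $|N(u)\symdiff N(w)| \le 2k$ by the triangle inequality, while $|N(u)\symdiff N(w)| \in D$ and $(k,2k]\cap D = \emptyset$ together force $|N(u)\symdiff N(w)| \le k$. Vertices of the same type are at symmetric-difference distance at most $2\ell \le k$, so each type sits in a single $\rho_k$-class, and the index is at most $2^q-1$. Hence $G$ is $(k_0, 2^q-1)$-near-uniform, as required. I expect the main obstacle to be precisely this gap-finding step: showing that the a priori unbounded distance spectrum $D$ is in fact confined to a bounded number of short intervals is what lets us bound $k$ by a function of $\ell$ and $q$ alone; without this structural reduction one could not hope to control $k$ independently of $|V(G)|$.
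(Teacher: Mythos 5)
Your proof is correct, and it takes a genuinely different route from the paper's. Where the paper defines an auxiliary graph $G_k$ on $V(G)$ with edges between near-$k$-twins, analyzes its connected components and dominating sets, and runs an induction on the dominating-set size $q$, your argument is purely ``metric'': you treat $|N(\cdot)\symdiff N(\cdot)|$ as a pseudometric, use the triangle inequality to confine the distance spectrum $D$ to $O(q^2)$ intervals of width $O(\ell)$ around the pairwise distances $d_{ij}$ of the covering representatives, and then run a gap-finding loop to locate a threshold $k$ with $(k,2k]\cap D = \emptyset$, which immediately forces transitivity of $\rho_k$. Both arguments work; the paper's yields slightly sharper constants ($k_0 \le 2\cdot 8^{q-1}\ell$ and index $\le q$) versus your $k_0 = O(2^{q^2}\ell)$ and index $\le 2^q-1$, though the latter could be tightened to $q$ at no cost by noting that every vertex is already $\rho_k$-related to some representative $v_i$ (since $\ell\le k$), so each class contains some $v_i$. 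The conceptual trade-off is that your approach dispenses with the auxiliary graph and the induction entirely, making the structural reason behind the equivalence more transparent at the price of a worse exponent in the bound on $k_0$.
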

\begin{proof}
It is easy to see that if $\mathcal{C}$ is $(k_0,p)$-near-uniform then it is
$(k_0,p)$-near-covered: for every graph $G \in \mathcal{C}$ there is 
$k \le k_0$ such that near $k$-twin relation is an equivalence with
$p'$ classes $C_1, \ldots, C_{p'}$ where $p' \le p$.  We pick an arbitrary 
vertex $v_i$ from from each class $C_i$ to obtain vertices $v_1, \ldots, v_{p'}$. 
Clearly, each vertex of $G$ is a near-$k$-twin of one of these vertices.

To prove the opposite direction, consider first the following construction: 
To any graph $G$ and $k$, we define auxiliary graph $G_k$  on the same vertex set 
by setting $(u,v) \in E(G_k)$ if and only if $u$ and $v$ are near-$k$-twins in $G$. 
Observe the following easy properties of this construction:
\begin{enumerate}
\item Graph $G$ is $(\ell,q)$-near-covered if and only if $G_\ell$ has a dominating set of size at most $q$.
\item\label{item:cliques} If for some $k$ the graph $G_k$ is a disjoint union of at most $p$ cliques, then near-$k$-twin is an equivalence with $p$ classes on $G$ (and so $G$ is $(k,p)$-near-uniform).
\item\label{item:dist} If two vertices are at distance at most $p$ in $G_k$ then they are $pk$-near-twins in $G$.
\item\label{item:component} If $G_k$ contains a component with radius greater than $1$ then this component has to be dominated by at least two vertices. Moreover, in any dominating set of such connected component there are two vertices which are at distance at most $3$ in $G_k$.
\end{enumerate}

We now prove that any graph class $\mathcal{C}$ which is near-covered with
parameters $\ell$ and $q$ is also near-uniform.  We proceed by induction on $q$.  
For the case when $q=1$ the graph $G_\ell$ has a dominating set of size $1$.  
This means that every two vertices in $G_\ell$ are at distance at most $2$, 
it follows from \eqref{item:dist} that any two vertices of $G$ are near-$2\ell$-twins. 
Graph $G$ is therefore $(2\ell,1)$-near-uniform, which finishes the induction basis.

For the induction step, we fix $q>1$ and
assume that every $(m, q-1)$-near-covered graph
class, for any $m$, is $(a,b)$-near-uniform for some values
$a, b$ depending only on $m$ and $q$.
Consider now a graph class $\mathcal{C}$ which is $(\ell,q)$-near-covered.  
We will prove that every graph from
$\mathcal{C}$ is $(2\ell,q)$-near-uniform or $(8\ell,q-1)$-near-covered.
The latter case, from the induction hypothesis, implies that $\mathcal{C}$ is
$(a,b)$-near-uniform where $a,b$ depend only on $\ell$ and~$q$.  
As a result, every graph in $\mathcal{C}$ is
$(\max(2\ell,a), \max(q,b))$-near-uniform 
and so $\mathcal{C}$ is near-uniform.

We take a graph $G\in\mathcal{C}$ which is $(\ell,q)$-near-covered,
and consider the derived graph $G_\ell$. 
If $G_\ell$ has dominating set of size smaller than $q$ then 
it is actually $(\ell,q-1)$-near-covered,
which means it is also $(8\ell,q-1)$-near-covered as desired.  
From now on we therefore assume that $G_\ell$ has a smallest dominating 
set $S =\{v_1, \ldots, v_q\}$. 
We distinguish two cases:

\begin{enumerate}[I.]
\item\label{item:large_radius} 
$G_\ell$ contains a connected component $C$ with radius at least $2$. By
property~\eqref{item:component} of the construction, there are two
vertices $v_i, v_j$ from $S$ which are at distance at most $3$ in $C$. 
Consider now the graph $G_{4\ell}$.  We claim that $G_{4\ell}$ has a dominating
set of size at most $q-1$, which means that $G$ is $(4\ell,q-1)$-near-covered
and therefore also $(8\ell,q-1)$-near-covered as desired.

First note that $G_{4\ell}$ is supergraph of $G_\ell$, so $S$ is a dominating set
of $G_{4\ell}$.  We claim that $S \setminus v_j$ (of size $q-1$) is also a
dominating set of $G_{4\ell}$.  To see this, consider any vertex $u$ dominated
by $v_j$ in $G_\ell$.  Since the distance between $v_i$ and $v_j$ in $G_\ell$ is
at most $3$, the distance between $v_i$ and $u$ is at most $4$ in $G_\ell$. 
This means, by~\eqref{item:dist}, that $v_i$ and
$u$ are $4\ell$-near-twins in $G$.  This in turn means that there is an edge
between $v_i$ and $u$ in $G_{4\ell}$, and so $u$ is dominated by $v_i$ in
$G_{4\ell}$.  Since $u$ was an arbitrary neighbour of $v_j$ (in $G_\ell$), every
vertex dominated by $v_j$ in $G_\ell$ is dominated by $v_i$ in $G_{4\ell}$.
Therefore, $S \setminus v_j$ is a dominating set in $G_{4\ell}$ of size $m-1$.

\item 
All connected components of $G_\ell$ have radius at most $1$.  This means that
$G_\ell$ consists of components $C_1, \ldots, C_q$ such that $v_i \in C_i$ for $i=1,\dots,q$.  
In this case we consider the graph $G_{2\ell}$.  Since every two vertices in the
same component $C_i $ of $G_\ell$ are at distance at most $2$, they are
$2\ell$-near-twins in $G$ and so there is an edge between them in $G_{2\ell}$,
which means that each component $C_i$ forms a clique in $G_{2\ell}$.  We
distinguish two possibilities:

\begin{enumerate}
\item There is no pair of distinct indices $i,j$ such that there exists an edge
in $G_{2\ell}$ between some vertices $u \in C_i$ and $w \in C_j$.
In this case the graph $G_{2\ell}$ is a
disjoint union of $q$ cliques, which means that $G$ is $(2\ell,q)$-near-uniform 
by property~\eqref{item:cliques}.
\item 
There exists a pair of distinct indices $i,j$ such $G_{2\ell}$
contains an edge $uw$ between some vertices $u \in C_i$ and $w \in C_j$.
Recall that $v_i$ and $v_j$ are the vertices from
$S$ which are contained in $C_i$ and $C_j$, respectively.  These vertices
are in the same component in $G_{2\ell}$ and at distance at most $4$.  
By the same argument as in the case~\ref{item:large_radius}, the set $S
\setminus v_j$ is a dominating set of size $q-1$ of the graph $G_{8\ell}$ ,
which means that $G$ is $(8\ell,q-1)$-near-covered, as desired.
\end{enumerate}
\end{enumerate}
\vspace*{-4ex}
\end{proof}

\section{FO model checking algorithm}
\label{sec:algorithm}

This section constitutes the main algorithmic contribution of the paper.

Our model checking algorithm 
for near-uniform graph classes can be shortly summarized as follows.
Input is a graph $H$ from a $(k_0,p)$-near-uniform graph class $\mathcal{C}$ 
and an FO sentence $\phi$.
Perform the following steps:
\begin{enumerate}
\item\label{it:alg-k} For each $k:=0,1,\dots,k_0$;
compute the near-$k$-twin relation $\rho_k$ of $H$,
and check whether $\rho_k$ is an equivalence of index at most $p$.
This test has to succeed for some value of~$k$
(Definition~\ref{def:near-uniform}).

\item Compute a universal formula $\psi(x,y)$ depending on $k_0$ and $p$, 
and the graph $G_H$ depending on $H$ and $k$ found in step~\ref{it:alg-k},
such that $H = I_{\psi}(G_H)$ and the vertex degrees in $G_H$ are at
most~$2k_0p$ (Theorem~\ref{thm:decomposition}).

\item Run the algorithm of~\cite{Seese96}
for FO model checking on graphs of bounded degree on
$G_H$ and the sentence $\phi'$, where $\phi'$ is obtained from $\phi$ 
by replacing every occurrence of $edge(z,z')$ with~$\psi(z,z')$.
\end{enumerate}

\begin{thm}\label{thm:MCalgo}
Let $\mathcal{C}$ be a $(k_0,p)$-near-uniform graph class 
for some \mbox{$k_0,p\in\mathbb N$}.
Then the FO model checking problem of $\mathcal{C}$ is fixed-parameter
tractable when parameterized by the formula size,
i.e., solvable in time $f(|\phi|)\cdot|V(G)|^{\mathcal{O}(1)}$
for a computable function $f$ and input $G,\phi$.
\end{thm}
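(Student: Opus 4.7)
The plan is to formalize the three-step algorithm laid out just above the statement and to verify that each step fits within the FPT running-time bound. Correctness of the overall reduction rests on the standard semantic property of FO interpretations already recalled in Section~\ref{sec:interpretation}: if $H = I_{\psi}(G_H)$ and $\phi'$ is obtained from $\phi$ by replacing every occurrence of the atom $edge(z,z')$ by $\psi(z,z')$, then $H\models\phi$ iff $G_H\models\phi'$. Thus, once Steps 1 and 2 produce a bounded-degree graph $G_H$ together with the interpretation formula $\psi$, Step 3 reduces model checking on $H$ to model checking on $G_H$, where we can invoke Seese's algorithm~\cite{Seese96}.

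For Step 1, I would argue that for each fixed $k$ the relation $\rho_k$ can be computed in polynomial time simply by comparing, for each pair $u,v\in V(H)$, the symmetric difference $N(u)\symdiff N(v)$ to the threshold $k$; this costs $\mathcal{O}(|V(H)|^3)$. Given $\rho_k$ as a table, reflexivity and symmetry are automatic, so testing whether $\rho_k$ is an equivalence amounts to a transitivity check, and counting classes to verify index at most $p$ is immediate. By Definition~\ref{def:near-uniform}, since $H$ belongs to a $(k_0,p)$-near-uniform class, the loop is guaranteed to succeed for at least one $k\in\{0,1,\dots,k_0\}$. I emphasise that we must try all values of $k$ and not just stop at the smallest one: Figure~\ref{fig:P6notequiv} shows that $\rho_k$ being an equivalence for some $k$ does not propagate to larger $k'$. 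This only multiplies the work by a factor of $k_0+1$, which is an absolute constant depending on the class.

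For Step 2, the entire work is delegated to Theorem~\ref{thm:decomposition}, which from $H$ and the value $k$ found in Step 1 produces, in polynomial time, a graph $G_H$ of maximum degree at most $2k_0 p$ together with a universal formula $\psi(x,y)$ whose size depends only on $k_0$ and $p$, such that $H = I_{\psi}(G_H)$. Step 3 then constructs $\phi'$ by textual substitution, yielding $|\phi'| = \mathcal{O}(|\phi|\cdot |\psi|) = \mathcal{O}(|\phi|\cdot g(k_0,p))$ for some computable $g$, and runs Seese's algorithm on $(G_H,\phi')$ in time $f_0(|\phi'|)\cdot |V(G_H)|^{\mathcal{O}(1)}$. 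Composing these bounds with the polynomial overhead of Steps 1 and 2 gives the claimed running time $f(|\phi|)\cdot |V(H)|^{\mathcal{O}(1)}$, where $f$ absorbs the fixed dependence on $k_0$ and $p$.

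The main conceptual difficulty has already been factored out into Theorem~\ref{thm:decomposition}, so what remains here is bookkeeping. The one genuine subtlety to be careful about is that the degree bound on $G_H$ and the size of $\psi$ must both be expressible as functions of $k_0$ and $p$ alone (independent of $H$), so that the parameter dependence collapses into $f$; this is exactly what Theorem~\ref{thm:decomposition} guarantees, and verifying that the substitution $\phi\mapsto\phi'$ does not accidentally inflate the parameter beyond such a function is the last routine check.
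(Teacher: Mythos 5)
Your proposal is correct and follows essentially the same route as the paper: the paper proves this theorem by exactly the three-step algorithm you formalize (try all $k\le k_0$, invoke Theorem~\ref{thm:decomposition} to produce $G_H\in\mathcal{D}_{2k_0p}$ and the universal $\psi$, then apply Seese's algorithm to $(G_H,\phi')$). Your running-time bookkeeping and the observation about trying all $k$ rather than stopping at the smallest one match the paper's intent precisely.
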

The rest of this section is devoted to the proof of this statement.

\subsection{Properties of the near-$k$-twin relation}
\label{subsec:near-k-properties}

To give details of the algorithm and to prove Theorem~\ref{thm:MCalgo},
we study some structural properties of graphs for which 
the near-$k$-twin relation is actually an equivalence. 

As outlined above in the algorithm,
our key step is to show that
all near-uniform graph classes are FO interpretable in graph classes of bounded degree.  
For this we show that for any two large enough 
equivalence classes of a near-k-twin equivalence,
it holds that every vertex from one class is connected to almost all or
to almost none vertices of the other class and vice versa.
More precisely:

\begin{lem}
\label{lem:small_degree}
Let $k\geq1$ and $G$ be a graph such that
the near-$k$-twin relation $\rho_k$ of $G$ is an equivalence on~$V(G)$.
Let $U$ and $V$
be two near-$k$-twin classes of $G$  with at least $4k+2$ vertices each
(it may be~$U=V$).
Then for every $v\in V$ we have
\[\min\{|U\cap N(v)|,\> |U\setminus N(v)|\}\leq 2k .\]
\end{lem}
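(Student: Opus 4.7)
The plan is a proof by contradiction via a double-counting argument. Assume for contradiction that some $v \in V$ satisfies both $|A| \ge 2k+1$ and $|B| \ge 2k+1$, where $A := U \cap N(v)$ and $B := U \setminus N(v)$. I will select pairs $(u_A^{(i)}, u_B^{(i)})$ for $i = 1, \ldots, t$ with $t := 2k$, chosen so that $u_A^{(1)}, \ldots, u_A^{(t)} \in A$ and $u_B^{(1)}, \ldots, u_B^{(t)} \in B \setminus \{v\}$ are pairwise distinct. This is feasible because $|A| \ge 2k+1$ and $|B \setminus \{v\}| \ge 2k$ — the $-1$ slack accommodates the only delicate case $U = V$, in which $v \in B$ automatically since $v \notin N(v)$. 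Setting $D_i := N(u_A^{(i)}) \symdiff N(u_B^{(i)})$, near-$k$-twinness within $U$ gives $|D_i| \le k$, and $v \in D_i$ is immediate.

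The core observation is the following: for any $v' \in V \setminus \{v\}$ with $v' \notin D_i$, the vertices $u_A^{(i)}$ and $u_B^{(i)}$ share the same adjacency status to $v'$, so either both are adjacent (and then $u_B^{(i)} \in N(v') \setminus N(v) \subseteq N(v) \symdiff N(v')$) or neither is (and then $u_A^{(i)} \in N(v) \setminus N(v') \subseteq N(v) \symdiff N(v')$). Either way a canonical witness from $\{u_A^{(i)}, u_B^{(i)}\}$ lies in $N(v) \symdiff N(v')$; since the $2t$ candidate vertices are pairwise distinct, different indices $i$ contribute distinct witnesses.

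Finally I would double-count the set $\{(i, v') \in [t] \times (V \setminus \{v\}) : v' \notin D_i\}$. The lower bound $t(|V|-k)$ uses that $v \in D_i$ and $|D_i| \le k$, so $|D_i \cap (V \setminus \{v\})| \le k-1$. The upper bound $k(|V|-1)$ uses that for each fixed $v'$ the number of admissible $i$ is bounded by $|N(v) \symdiff N(v')| \le k$ (near-$k$-twinness within $V$, combined with the witness-distinctness established above). Comparing yields $t(|V|-k) \le k(|V|-1)$; substituting $t = 2k$ collapses this to $|V| \le 2k-1$, contradicting $|V| \ge 4k+2$.

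The main obstacle will be the bookkeeping that ensures different indices produce distinct witnesses inside $N(v) \symdiff N(v')$. This relies crucially on the pairwise-distinct choice of the $2t$ selected vertices (forcing the $-1$ slack in the case $U=V$) and on the two-case analysis of adjacency to $v'$. Once this is in place, the algebraic contradiction falls out of the double count.
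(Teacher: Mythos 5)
Your proof is correct, and it is a genuine (if close) variant of the paper's argument. Both are double-counting arguments that pit the near-$k$-twinness of $U$ against that of $V$ and force a contradiction with the class sizes $\geq 4k+2$. The paper is more brute-force: it fixes an arbitrary subset $U' \subseteq U$ of size $4k+2$ and counts, over all $\binom{4k+2}{2}$ unordered pairs $\{u,u'\} \subseteq U'$ and all $w \in V$, the triples where exactly one of $wu, wu'$ is an edge; the lower bound then comes from $\alpha^{U'}(w)\cdot(|U'|-\alpha^{U'}(w))$-type expressions for each $w$. You instead hand-pick $t = 2k$ \emph{matched} pairs straddling $N(v) \cap U$ and $U \setminus N(v)$ (forcing $v \in D_i$ for free), count the complementary event $v' \notin D_i$, and get the upper bound from an injection of indices into $N(v)\symdiff N(v')$ via the distinct-witness observation. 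This buys you a linear rather than quadratic number of pairs, a tighter final inequality ($|V| \leq 2k-1$), and no case split on $w \in U'$ versus $w \notin U'$; the price is the extra bookkeeping needed to guarantee the $4k$ chosen vertices are pairwise distinct (hence the $B\setminus\{v\}$ precaution when $U = V$) and that the witnesses stay injective, including the degenerate cases where a chosen vertex coincides with $v'$. All of those checks go through, so the argument is sound.
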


Note that the claim of Lemma~\ref{lem:small_degree} universally holds only
when both $U$ and $V$ are sufficiently large.
A~counterexample with small $U$ is a graph consisting of
$U=\{u\}$ and $V$ inducing a large clique, such that $u$ is connected to
half of the vertices of~$V$.
For this graph the near-$1$-twin classes are exactly $U$ and $V$,
but both $|V\cap N(u)|$ and $|V\setminus N(u)|$ are unbounded.

\begin{proof}
For $x\in V(G)$ and $A\subseteq V(G)$, let 
$\alpha^A(x) = \min\{|N(x)\cap A|,\, |A\setminus N(x)|\}$. Thus to prove the
lemma we need to show that $\alpha^{U}(v) \leq 2k$ for $v\in V$.

Towards a contradiction assume $\alpha^{U}(v) \ge 2k+1$
for some $v\in V$. 
Clearly, there is a subset $U'\subseteq U$ such that
$|U'|=4k+2$ and $\alpha^{U'}(v) \ge 2k+1$, too.
Since $|N(w) \symdiff N(v)| \le k$ for any $w\in V$ by the definition of~$\rho_k$,
we also get $\alpha^{U'}(w)\geq \alpha^{U'}(v)-k \ge 2k+1-k = k+1$
for all $w\in V$.

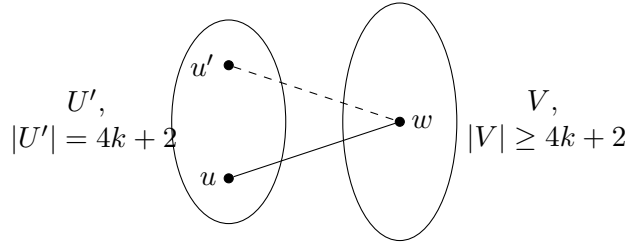
\begin{figure}[t]
$$
\begin{tikzpicture}[scale=0.75]
\tikzstyle{every node}=[draw, shape=circle, inner sep=1.2pt, fill=black]
\draw (2,2) node[label=left:$u$] (u) {};
\draw (5,3) node[label=right:$w$] (w) {};
\draw (2,4) node[label=left:$u'$] (uu) {};
\draw (u) -- (w);
\draw[dashed] (uu) -- (w);
\draw (2,3) ellipse (1cm and 1.8cm);
\draw (5,3) ellipse (1cm and 2.1cm);
\node[fill=none, draw=none, text width=2cm, align=center] at (-0.5,3) {
	$U'$, \mbox{$|U'|=4k+2$} };
\node[fill=none, draw=none, text width=2cm, align=center] at (7.5,3) {
	$V$, \mbox{$|V|\geq4k+2$} };
\end{tikzpicture}
$$
\caption{An illustration; 
counting the pairs $(w,\{u,u'\})$ such that $w\in V$, $u,u'\in U'$
in the proof of Lemma~\ref{lem:small_degree}, in case $U\not=V$.}
\label{fig:counting-uuw}
\end{figure}

We are going to count the number $D$ of pairs $(w,\{u,u'\})$ such that
$w\in V$, $u,u'\in U'$ are distinct vertices
and exactly one of $wu$, $wu'$ is an edge of~$G$.
See Figure~\ref{fig:counting-uuw}.
On the one hand, for any fixed $u,u'\in U'$, every $w$ forming such a
desired pair $(w,\{u,u'\})$ belongs to $N(u)\symdiff N(u')$ 
and so we have got an upper bound
\begin{equation}
  \begin{split}    
  D& \leq \sum_{\{u,u'\} \in {U' \choose 2}} |N(u)\symdiff N(u')| \leq\\
   & \leq {|U'|\choose 2}\cdot k = {4k+2\choose 2}\cdot k 
   < 3k^2(4k+2) \,,
  \end{split}
\label{eq:D-one}
\end{equation}
where $|N(u)\symdiff N(u')|\leq k$ holds by the definition of~$\rho_k$.

On the other hand, we may fix $w\in V$ and count the number of unordered pairs
$u,u'\in U'\setminus\{w\}$ such that exactly one of $wu$, $wu'$ is an edge of~$G$;
this number is equal to $|N(w)\cap U'|\cdot|U'\setminus N(w)|=
 \alpha^{U'}(w)\cdot\big(|U'|-\alpha^{U'}(w)\big)$ if $w\not\in U'$,
and to $\alpha^{U'}(w)\cdot\big(|U'|-1-\alpha^{U'}(w)\big)$ 
or $(\alpha^{U'}(w)-1)\cdot\big(|U'|-\alpha^{U'}(w)\big)$ 
if~$w\in U'$.
Therefore,
\begin{equation}
  \begin{split}    
  D & \geq \sum_{w\in V} \big(\alpha^{U'}(w)-1\big)\cdot
			\big(|U'|-1-\alpha^{U'}(w)\big)\\
   & \geq \sum_{w\in V} (k+1-1)(4k+2-1-k-1)\\
   & =  |V|\cdot3k^2 \geq3k^2(4k+2)
  \end{split}
\label{eq:D-two}
\end{equation}
since we have got $\alpha^{U'}(w) \ge k+1$ and $|V|\geq4k+2=|U'|$.

Now, \eqref{eq:D-one} and \eqref{eq:D-two} are in a contradiction,
and hence the sought conclusion follows.
\end{proof}

\begin{cor}\label{cor:small_degree}
Let $U$ and $V$ be the two classes of Lemma~\ref{lem:small_degree} such that~$|U|,|V|\geq5k+1$.
Then exactly one of the following two possibilities holds:
\begin{enumerate}[(a)]\parskip0pt
\item
every vertex of\/ $U$ is connected to at most $2k$ vertices of~$V$
and every vertex of\/ $V$ is connected to at most $2k$ vertices of~$U$, or
\item\label{it:almostall}
every vertex of\/ $U$ is connected to \emph{all but} at most $2k$ vertices of~$V$
and every vertex of\/ $V$ is connected to \emph{all but} 
at most $2k$ vertices of~$U$.
\end{enumerate}
\end{cor}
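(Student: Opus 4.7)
The plan is to bootstrap Lemma~\ref{lem:small_degree} from an ``at least one endpoint'' statement into a clean dichotomy over the whole pair $(U,V)$. Since $5k+1 \ge 4k+2$ for all $k \ge 1$, Lemma~\ref{lem:small_degree} applies not only to each $v \in V$ with respect to $U$ but also, by interchanging the roles of the two classes, to each $u \in U$ with respect to $V$. This gives two mutually exclusive regimes per vertex: I would call $v \in V$ \emph{sparse} (into $U$) if $|U \cap N(v)| \le 2k$, and \emph{dense} if $|U \setminus N(v)| \le 2k$; the corresponding notions apply to each $u \in U$. Exclusivity follows from $|U|,|V| \ge 5k+1 > 4k$.

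The first step is to prove a per-class dichotomy: all vertices of $V$ are sparse, or all are dense; and likewise for $U$. Suppose for contradiction that $v,v' \in V$ are sparse and dense, respectively. Every $u \in U$ with $u \notin N(v)$ and $u \in N(v')$ belongs to $N(v) \symdiff N(v')$, and by inclusion-exclusion within $U$ there are at least
\[
|U| - |U \cap N(v)| - |U \setminus N(v')| \;\ge\; |U| - 4k \;\ge\; k+1
\]
such $u$, so $|N(v) \symdiff N(v')| > k$, contradicting $(v,v') \in \rho_k$. The same argument with the roles swapped yields the dichotomy inside $U$.

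The second step is to rule out mismatched regimes across the two classes, which I would do by double-counting the edges between $U$ and $V$. If $V$ were uniformly dense while $U$ were uniformly sparse, counting from $V$ gives at least $(|U|-2k)|V|$ edges while counting from $U$ gives at most $2k|U|$; rearranging $(|U|-2k)|V| \le 2k|U|$ and dividing by $|U||V|$ yields $1 \le 2k\bigl(1/|U| + 1/|V|\bigr) \le 4k/(5k+1) < 1$, a contradiction. The symmetric mismatch (sparse $V$, dense $U$) is ruled out identically. Hence either both classes are sparse (case~(a)) or both are dense (case~(b)), and exclusivity shows these are the only possibilities.

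The main obstacle here is really just bookkeeping: Lemma~\ref{lem:small_degree} only provides a \emph{local} dichotomy per vertex, and one needs the slightly stronger hypothesis $|U|,|V| \ge 5k+1$ (rather than the $4k+2$ of the lemma) to simultaneously separate the two regimes unambiguously, force $|N(v) \symdiff N(v')| > k$ in the symmetric-difference computation, and make the edge-counting inequality strict. The degenerate case $U=V$ is handled by the same arguments: only one class-level dichotomy needs to be established, and the cross-class consistency step becomes vacuous.
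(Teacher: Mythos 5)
Your proof is correct and follows essentially the same two-step strategy as the paper: first establish the per-vertex-class dichotomy (all sparse or all dense) via the $\rho_k$ bound on symmetric differences, then rule out the mismatched (sparse/dense) regime by double-counting edges between $U$ and $V$. The only cosmetic difference is in the final edge-count contradiction, which you obtain by normalizing by $|U||V|$ whereas the paper expands the difference $(|U|-2k)|V|-2k|U|$ algebraically; both reduce to the same arithmetic.
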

\begin{proof}
We first show that either 
\begin{itemize}\parskip0pt
\item every vertex of\/ $U$ is connected to at most $2k$ vertices
  of~$V$, or
\item every vertex of\/ $U$ is connected to all but $2k$ vertices of~$V$.
\end{itemize}
Indeed, for any vertex $v\in U$ taken separately,
only one of these cases can happen since $|V|>4k$,
and one of these cases has to happen by Lemma~\ref{lem:small_degree}.
Assume that there exist $v,w\in U$ with $v$ having at most $2k$ neighbours in $V$
while $w$ is connected to all but at most $2k$ vertices of~$V$.
Then $|N(v) \symdiff N(w)|\geq |V|-2k-2k\geq k+1$,
contradicting the definition of~$\rho_k$.

To finish the proof, we have to show the the following case
(relevant if $U\not=V$) is impossible:
every vertex of\/ $U$ connected to at most $2k$ vertices of~$V$ and
every vertex of\/ $V$ connected to \emph{all but} at most $2k$ vertices of~$U$.
In the argument we count the total number of edges between $U$ and $V$;
it would be at most $2k\cdot|U|$ and, at the same time, at least
$(|U|-2k)\cdot|V|$.
Though, the difference between these lower and upper estimates is
  \begin{equation}
    \begin{split}
      & (|U|-2k)\cdot|V| \,-\, 2k\cdot|U| \\
    =\quad &  |U|\cdot|V| -2k\cdot(|U|+|V|)\\
    =\quad & \big(|U|-4k\big)\big(|V|-4k\big) +2k\big(|U|+|V|\big) -16k^2\\
    >\quad & k\cdot k +2k(5k+5k) -16k^2 = 5k^2>0 ,      
    \end{split}
  \end{equation}
a contradiction, thus finishing the whole proof.
\end{proof}

\begin{rem}\label{rem:single_set}
  Note that Corollary~\ref{cor:small_degree} still applies if $U=V$.
  I.e., for a single
  near-$k$-twin equivalence class $U$ with $|U|>5k+1$ either
  \begin{enumerate}[a)]
  \item every vertex of $U$ has at most $2k$ neighbours in~$U$, or
  \item every vertex of $U$ has at least $|U|-2k$ neighbours in~$U$.
  \end{enumerate}
\end{rem}

\subsection{From near-$k$-twins to bounded degree}

Here we present the core of our algorithm --  a procedure which, given a graph $H$ for which
the near-$k$-twin relation of $H$ is an equivalence of bounded index, produces a
(labelled) graph $G_H$ (on the same vertex set) of bounded degree, and a formula $\psi(x,y)$
such that $H = I_{\psi}(G_H)$. 

The idea behind the procedure is the following: We start by dividing
the near-$k$-twin classes of $H$ into ``small'' and ``large'' ones (w.r.t.~$k$), 
dealing with each of these two types of classes separately. 

\begin{itemize}
\item 
Each large class (more precisely, the vertices in the class) is assigned a label and
each pair of large classes receives another label indicating whether there are
``almost all'' or ``almost none'' edges between the two classes.  The exceptions
to ``almost all'' or ``almost none'' rules will be remembered by edges of
the graph $G_H$ (by Corollary~\ref{cor:small_degree} each vertex has a bounded number of
such exceptions, hence the bounded degree of $G_H$).  Using these labels and
the graph $G_H$ we properly encode the $H$-adjacency between 
the vertices in the large classes.
\item 
The $H$-adjacency of the vertices from small equivalence classes (both within the
small classes and also to the large ones) is encoded by assigning a new
label to each such vertex and another new label to its neighbourhood.  
The vertices from small classes have no edges in the graph $G_H$.
\end{itemize}
 
Note that the construction sketched above depends on $k$ and also on the
number of near-$k$-twin equivalence classes of $H$.  
Unfortunately, as explained earlier, we cannot fix one universal value of
the parameter $k$ beforehand, but at least we can use upper bounds on both
$k$ and the number of equivalence classes (as in Definition~\ref{def:near-uniform}).
With a slightly more complicated use of labels,
we can then give a universal formula $\psi(x,y)$ which depends only on the
parameters $k_0$ and $p$ of a $(k_0,p)$-near-uniform graph
class~$\mathcal{C}$, but is independent from particular $H\in\mathcal{C}$.
This way we get a result even stronger than what is required for the proof of
Theorem~\ref{thm:MCalgo} (see Section~\ref{sec:interpretability} for more
discussion):

\begin{thm}
\label{thm:decomposition}
Let $k_0,p\in\mathbb N$, and $\mathcal{C}$ be a $(k_0,p)$-near-uniform graph class. 
There exists an FO formula $\psi(x,y)$, depending only on $k_0$ and $p$,
such that $\mathcal{C} \subseteq I_{\psi}(\mathcal{D}_{2k_0p})$ 
where $\mathcal{D}_d$ denotes the class of (finite) graphs of degree at most $d$.

Furthermore, for any $H\in\mathcal{C}$ and $k\leq k_0$ such that
the near-$k$-twin relation of $H$ is an equivalence of index at most~$p$,
one can in polynomial time compute a graph $G_H\in\mathcal{D}_{2k_0p}$
such that $H=I_{\psi}(G_H)$.
\end{thm}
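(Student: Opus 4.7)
\medskip
\noindent\textbf{Proof plan.} The plan is to build $G_H$ by splitting the near-$k$-twin classes of $H$ into ``small'' and ``large'' ones according to a uniform threshold, encoding the massive bulk of the edges of $H$ via class-based labels, and recording only local ``exceptions'' to the dichotomy of Corollary~\ref{cor:small_degree} and Remark~\ref{rem:single_set} as actual edges of $G_H$. Concretely, I would fix a threshold $T := 5k_0+1$ and, given $H$ and the appropriate $k\leq k_0$, compute the near-$k$-twin classes $C_1,\dots,C_m$ (with $m\leq p$); I then call $C_i$ \emph{large} if $|C_i|\geq T$ and \emph{small} otherwise. Since $T\geq 5k+1$ for every $k\leq k_0$, every (ordered or unordered) pair of large classes falls under Corollary~\ref{cor:small_degree} (and each large class itself under Remark~\ref{rem:single_set}), so each such pair has a well-defined regime, ``almost all'' or ``almost none'' of the cross-edges. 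The total number of vertices belonging to small classes is at most $p(T-1)$, a quantity depending only on $k_0$ and~$p$.

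Second, I would fix once and for all a finite palette of labels, determined solely by $k_0$ and $p$: class-indicator labels $A_1,\dots,A_p$; for every pair $1\leq i\leq j\leq p$ a \emph{regime label} $R_{ij}$, placed on every vertex of $C_i\cup C_j$ exactly when the pair is in the ``almost all'' case; a pool of $p(T-1)$ distinct identifier labels $S_1,\dots,S_{p(T-1)}$, assigned bijectively to the small-class vertices of $H$; and a matching pool $N_1,\dots,N_{p(T-1)}$, where a vertex of $H$ carries $N_t$ iff it is an $H$-neighbour of the unique vertex bearing $S_t$. Next I would define $E(G_H)$: every small-class vertex is isolated in $G_H$, and for vertices $u\in C_i$, $v\in C_j$ both lying in large classes I put $\{u,v\}\in E(G_H)$ exactly when $uv$ is an \emph{exception} to the regime of $(C_i,C_j)$, that is, when the regime is ``almost none'' and $uv\in E(H)$, or the regime is ``almost all'' and $uv\notin E(H)$. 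The degree bound is then immediate from Corollary~\ref{cor:small_degree} and Remark~\ref{rem:single_set}: each large class contributes at most $2k$ exceptions to any given vertex, at most $p$ large classes appear, and so $\deg_{G_H}(v)\leq 2kp\leq 2k_0p$ for large-class $v$, while small-class vertices have degree~$0$.

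The universal interpretation formula $\psi(x,y)$ is then a fixed Boolean combination, built purely from the label palette above, performing case analysis on the labels of $x$ and~$y$: if $x$ carries $A_i$ and $y$ carries $A_j$ (both large), then $\psi(x,y)$ asserts ``$R_{ij}(x)$'' XOR $edge(x,y)$; if $x$ carries some $S_t$, then $\psi(x,y)$ asserts $N_t(y)$ (and symmetrically if $y$ carries $S_t$), which also subsumes the case where both endpoints lie in small classes. Because both the palette and the list of cases depend only on $k_0$ and $p$, the formula $\psi$ depends only on $k_0,p$, and by construction $H=I_\psi(G_H)$. Since $G_H$ has maximum degree at most $2k_0p$, this places every $H\in\mathcal{C}$ in $I_\psi(\mathcal{D}_{2k_0p})$. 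The procedure (computing the $\rho_k$-classes, determining each regime by a majority count of cross-edges, listing the exceptions, assigning the labels) is easily seen to run in polynomial time.

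The main obstacle I anticipate is precisely the last requirement: producing a \emph{single} $\psi$ that serves all $H\in\mathcal{C}$, even though the value of $k$, the number of classes $m$, the small/large classification, and the regime assignment all vary with~$H$. My plan resolves this by a deliberate separation of concerns: the \emph{static} combinatorial skeleton (``how many potential classes, pairs, and small-vertex slots exist in principle'') is hard-coded into $\psi$ via the finite label palette and the finite case enumeration, while all \emph{instance-dependent} choices (which classes are actually large, which regime each pair falls into, which vertices occupy the small-vertex slots) are shifted onto the labelling of $G_H$. Once this separation is in place, verifying $H=I_\psi(G_H)$ reduces to checking the (finitely many) cases of $\psi$ against the definitions of the regime and exceptions, which is routine.
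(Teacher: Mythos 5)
Your plan follows the paper's proof essentially step for step: partition the near-$k$-twin classes into small and large by a constant threshold, encode the prevailing regime (almost-all vs.\ almost-none) of each pair of large classes via labels, keep only the exceptions, bounded by Corollary~\ref{cor:small_degree} and Remark~\ref{rem:single_set}, as actual edges of $G_H$, and handle the constantly many small-class vertices by per-vertex identifier/neighbour label pairs. Your palette ($A_i$, $R_{ij}$, $S_t$, $N_t$) plays precisely the role of the paper's ($\lambda_i,\lambda'_i$; $\mu_{i,j},\nu_{i,j},\mu'_{i,j},\nu'_{i,j}$; $\sigma_j,\sigma^N_j$), your threshold $5k_0+1$ differs from the paper's $5k+1$ only in being uniform, and your degree bound $2kp\leq 2k_0p$ matches the paper's $2km\leq 2k_0p$.

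One small technical point you should patch: Corollary~\ref{cor:small_degree} and Remark~\ref{rem:single_set} both rest on Lemma~\ref{lem:small_degree}, which assumes $k\geq 1$, so you cannot cite them when the chosen $k$ happens to be $0$. For $k=0$ the near-$0$-twin classes are exactly the false-twin classes; each is an independent set, and between any two classes the adjacency is entirely present or entirely absent, so the regime dichotomy holds trivially with zero exceptions and $G_H$ is edgeless. The paper handles this degenerate case by a separate one-line observation, and your write-up should do the same.
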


\begin{proof}
We are going to prove the theorem by defining the formula $\psi(x,y)$ and,
for each $H \in \mathcal{C}$, efficiently constructing a~graph 
$G_H\in \mathcal{D}_{2k_0p}$ such that $H=I_{\psi}(G_H)$. 
We give the construction of the graph $G_H$ first,
while postponing the definition of $\psi$ to the end of the proof.

Let $0\leq k\leq k_0$ be such that
the near-$k$-twin relation of $H$ is an equivalence of index at most~$p$.
Let $V_1,\ldots,V_m$ where $m \le p$
be the near-$k$-twin classes of $H$ with more than $5k$ vertices
(possible ``small'' near-$k$-twin classes are ignored now).
Observe that $W=V_1\cup\dots\cup V_m$ contains all but at most 
$5k(p-m)\leq5k_0p$ vertices of~$H$.
Let $\overline{W}=V(H)\setminus W$ denote the remaining vertices in
``small'' equivalence classes.
See an illustration in Figure~\ref{fig:classesdivision}.

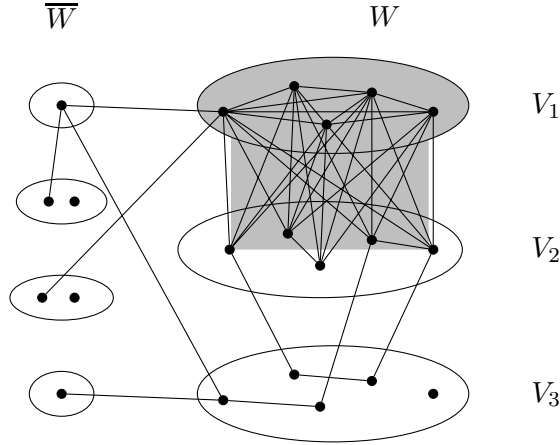
\begin{figure}[t]
$$
\begin{tikzpicture}[scale=0.85]
\tikzstyle{every node}=[draw, shape=circle, inner sep=1.2pt, fill=black]
\draw (1,6) ellipse (0.5cm and 0.35cm);
\draw (1,4.5) ellipse (0.7cm and 0.35cm);
\draw (1,3) ellipse (0.8cm and 0.35cm);
\draw (1,1.5) ellipse (0.5cm and 0.35cm);
\draw (1,6) node (s1) {};
\draw (0.8,4.5) node (s2) {};
\draw (1.2,4.5) node (s22) {};
\draw (0.7,3) node (s3) {};
\draw (1.2,3) node (s33) {};
\draw (1,1.5) node (s4) {};
\draw[color=lightgray, line width=2.6cm] (5.15,5.75) -- (5.15,3.75) ;
\draw[fill=lightgray] (5.2,6) ellipse (2.1cm and 0.75cm);
\draw (5,3.75) ellipse (2.2cm and 0.75cm);
\draw (5.2,1.5) ellipse (2.1cm and 0.75cm);
\draw (6.75,5.9) node (e1) {} ; 
\draw (6.75,3.75) node (e2) {} ; 
\draw (6.75,1.5) node (e3) {} ; 
\node (a1) at (3.5,5.9) {} ; \node (b1) at (4.6,6.3) {} ;
\node (c1) at (5.1,5.7) {} ; \node (d1) at (5.8,6.2) {} ;
\draw (e1) -- (a1) -- (b1) -- (c1) -- (d1) -- (e1) -- (c1) -- (a1) -- (d1) -- (b1) ;
\node (a2) at (3.6,3.75) {} ; \node (b2) at (4.5,4) {} ;
\node (c2) at (5,3.5) {} ; \node (d2) at (5.8,3.9) {} ;
\draw (a1) -- (a2) -- (b1) -- (b2) -- (c1) -- (c2) -- (d1) -- (d2)
	-- (e1) -- (e2) -- (c1) -- (a2) -- (d1) -- (b2) -- (a1)
	-- (d2) -- (e2) -- (a1) ;
\draw (c2) -- (e1) -- (b2) -- (c2) -- (b1) -- (d2) ;
\node (a3) at (3.5,1.4) {} ; \node (b3) at (4.6,1.8) {} ; 
\node (c3) at (5,1.3) {} ; \node (d3) at (5.8,1.7) {} ;
\draw (a3) -- (c3) -- (d2) ; \draw (d1) -- (e2) -- (d3) -- (b3) -- (a2) ;
\draw (s4) -- (a3) -- (s1) -- (a1) ; \draw (s3) -- (a1) ; \draw (s1) -- (s2) ;
\tikzstyle{every node}=[fill=none, draw=none]
\node at (1,7.4) {$\overline W$} ; \node at (6,7.4) {$W$} ;
\node at (8.5,6) {$V_1$} ; \node at (8.5,3.75) {$V_2$} ; \node at (8.5,1.5) {$V_3$} ;
\end{tikzpicture}
$$
\caption{An illustration; small (on the left, $\overline W$) 
and large (on the right, $W$) near-$k$-twin classes of a graph $H$, and
prevailing adjacencies within the large classes remembered by sets $F_1=\{1\}$
and $F_2=\{\{1,2\}\}$,  as in the proof of Theorem~\ref{thm:decomposition}.}
\label{fig:classesdivision}
\end{figure}

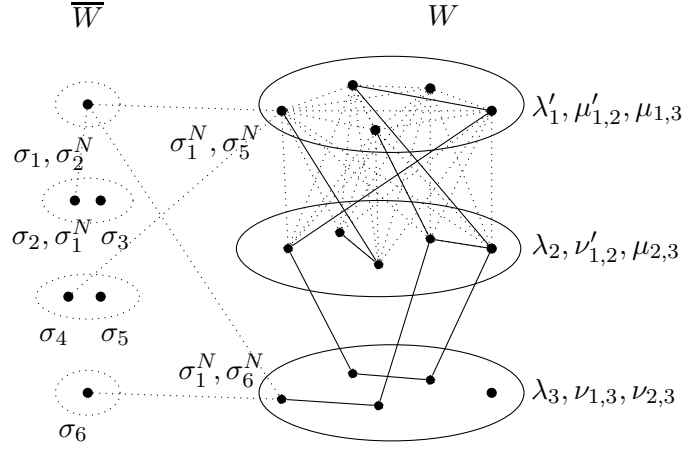
\begin{figure}[t]
$$
\begin{tikzpicture}[scale=0.85]
\tikzstyle{every node}=[draw, shape=circle, inner sep=1.2pt, fill=black]
\draw[dotted] (0.5,6) ellipse (0.5cm and 0.35cm);
\draw[dotted] (0.5,4.5) ellipse (0.7cm and 0.35cm);
\draw[dotted] (0.5,3) ellipse (0.8cm and 0.35cm);
\draw[dotted] (0.5,1.5) ellipse (0.5cm and 0.35cm);
\draw (0.5,6) node[label=below:${\hspace*{-3ex}\sigma_1,\sigma_2^N}~~~$] (s1) {};
\draw (0.3,4.5) node[label=below:${\hspace*{-3.5ex}\sigma_2,\sigma_1^N}$] (s2) {};
\draw (0.7,4.5) node[label=below:$\quad\sigma_3$] (s22) {};
\draw (0.2,3) node[label=below:$\sigma_4\quad$] (s3) {};
\draw (0.7,3) node[label=below:$\quad\sigma_5$] (s33) {};
\draw (0.5,1.5) node[label=below:$\sigma_6\quad$] (s4) {};
\draw (5.2,6) ellipse (2.05cm and 0.75cm);
\draw (5,3.75) ellipse (2.2cm and 0.75cm);
\draw (5.2,1.5) ellipse (2.05cm and 0.75cm);
\draw (6.75,5.9) node[label=right:$\quad{\lambda'_1,\mu'_{1,2},\mu_{1,3}}$] (e1) {} ; 
\draw (6.75,3.75) node[label=right:$\quad{\lambda_2,\nu'_{1,2},\mu_{2,3}}$] (e2) {} ; 
\draw (6.75,1.5) node[label=right:$\quad{\lambda_3,\nu_{1,3},\nu_{2,3}}$] (e3) {} ; 
\node[label=below:$\hspace*{-10.5ex}{\sigma_1^N,\sigma_5^N}$] (a1) at (3.5,5.9) {} ; 
\node (b1) at (4.6,6.3) {} ;
\node (c1) at (4.95,5.6) {} ; \node (d1) at (5.8,6.25) {} ;
\tikzstyle{every path}=[dotted]
\draw (e1) -- (a1) -- (b1) -- (c1) -- (d1) -- (e1) -- (c1) -- (a1) -- (d1) -- (b1) ;
\node (a2) at (3.6,3.75) {} ; \node (b2) at (4.4,4) {} ;
\node (c2) at (5,3.5) {} ; \node (d2) at (5.8,3.9) {} ;
\draw (a1) -- (a2) -- (b1) -- (b2) -- (c1) -- (c2) -- (d1) -- (d2)
	-- (e1) -- (e2) -- (c1) -- (a2) -- (d1) -- (b2) -- (a1)
	-- (d2) -- (e2) -- (a1) ;
\draw (c2) -- (e1) -- (b2) -- (c2) -- (b1) -- (d2) ;
\node[label=above:$\hspace*{-9.5ex}{\sigma_1^N\!,\sigma_6^N}$] (a3) at (3.5,1.4) {} ;
\node (b3) at (4.6,1.8) {} ; \node (c3) at (5,1.3) {} ; \node (d3) at (5.8,1.7) {} ;
\draw (s4) -- (a3) -- (s1) -- (a1) ; \draw (s3) -- (a1) ; \draw (s1) -- (s2) ;
\draw (d1) -- (e2) ;
\tikzstyle{every path}=[solid]
\draw (a3) -- (c3) -- (d2) ; \draw (e2) -- (d3) -- (b3) -- (a2) ;
\draw (b2) -- (c2) ; \draw (d2) -- (e2) -- (b1) ;
\draw (b1) -- (e1) ; \draw (a2) -- (e1) ; \draw (a1) -- (c2) ;
\draw (d2) -- (c1) ;
\tikzstyle{every node}=[fill=none, draw=none]
\node at (0.5,7.4) {$\overline W$} ; \node at (6,7.4) {$W$} ;
\end{tikzpicture}
$$
\caption{An illustration; graph $G_2$ of maximum degree~$3$
constructed for $H$ (the dotted edges) from Figure~\ref{fig:classesdivision},
and the resulting labelling of $V(G_2)=W\cup\overline W$, 
as in the proof of Theorem~\ref{thm:decomposition}.}
\label{fig:classeslabeling}
\end{figure}

We will construct the graph $G_H$ in three stages. First, we define the graph
$G_1=(W, E_1\cup E_2 )$ on the set $W$, where the edge sets are given as:

\begin{itemize}\parskip0pt
\item
Let $F_1$ be the set of those indices $i$ from $\{1,\dots,m\}$ such that
every vertex of $V_i$ has at least $|V_i|-2k$ neighbours in $V_i$ (case~(b) of
Remark~\ref{rem:single_set}). 
We put $E_1=\big\{\{u,v\} \mid u\not=v \land \exists
	 i\in F_1 \text{ s.t. } u,v\in V_i\big\}$.
\item
Let $F_2$ be the set of those index pairs $\{i,j\}$ from $\{1,\dots,m\}$
such that every vertex of\/ $V_i$ is connected to \emph{all but} at most $2k$ vertices of~$V_j$
and every vertex of\/ $V_j$ is connected to \emph{all but} 
at most $2k$ vertices of~$V_i$ (case~(\ref{it:almostall}) of Corollary~\ref{cor:small_degree}). 
We put $E_2=\big\{\{u,v\} \mid \exists \{i,j\}\in F_2 \text{ s.t. }
	 u\in V_i \land v\in V_j \big\}$.
\end{itemize}

In the second step, we adjust $G_1$ by the original edges from~$H$: Let
$E_W=\{\{u,v\}\in E(H) \mid u,v\in W\}$. Then we put $G_2=(W, E(G_1)\symdiff E_W)$.
See in Figure~\ref{fig:classeslabeling}.
Note that every vertex of $G_2$ has degree at most~$2km$ by Corollary~\ref{cor:small_degree}.

In the degenerate case of $k=0$ we arrive at the same conclusion
by the following alternative argument.
By the definition, each near-$0$-twin class is an independent set
and each pair of classes is again independent or induces a complete
bipartite subgraph---this now defines $G_1$ and $G_2$ which is actually edgeless.

In the third step we add back the vertices from $\overline W$ (remember that
$V(H)=W\cup \overline{W}$) by putting $G_H=(W\cup \overline{W}, E(G_2))$. 
Note that $G_H\in \mathcal{D}_{2km}\! \subseteq \mathcal{D}_{2k_0p}$ ,

Finally we label the vertices of $G_H$ by the following fixed label set, 
which is independent of particular~$H\in\mathcal{C}$:
\begin{eqnarray*}
L &:=& \{\lambda_i,\lambda'_i: i=1,\dots,p\} 
\\&&	\cup\, \{\mu_{i,j},\nu_{i,j},\> \mu'_{i,j},\nu'_{i,j}:
		1\leq i<j\leq p \}
\\&&	\cup\, \{\sigma_j,\sigma_j^N: j=1,\dots,5k_0p\} 
\end{eqnarray*}

The vertices of $G_H$ are labelled as follows
(see again Figure~\ref{fig:classeslabeling}):
\begin{itemize}\parskip0pt
\item
For $i=1,\dots,m\leq p$, each vertex of $V_i$ is assigned label 
$\lambda'_i$ if $i\in F_1$, and label $\lambda_i$ otherwise.
\item
For $1\leq i<j\leq m\leq p$, each vertex of $V_i$ is assigned label 
$\mu'_{i,j}$ and each of $V_j$ label $\nu'_{i,j}$ if $\{i,j\}\in F_2$,
and labels $\mu_{i,j}$ and $\nu_{i,j}$, respectively, if $\{i,j\}\not\in F_2$.
\item
Let $\overline{W}=\{w_1,w_2,\dots,w_r\}$ be indexed in any chosen order.
For $j=1,\dots,r\leq 5k_0p$, the vertex $w_j$ is assigned label $\sigma_j$ 
and each neighbour of $w_j$ in $H$ is assigned label $\sigma_j^N$.
\end{itemize}

With $G_H$ in place, we can now define the formula \[\psi(x,y) \>\equiv\>
  (x\neq y)\land(\psi'(x,y)\lor\psi'(y,x))\]
where
\begin{align*}
\psi'(x,y) 
  & \equiv \bigvee_{1\leq \,i\,\leq p} \big(\lambda_i(x)\land\lambda_i(y)\land edge(x,y)\big)\\ 
  & \vee \bigvee_{1\leq \,i\,\leq p}   \big(\lambda'_i(x)\land\lambda'_i(y)\land\neg edge(x,y)\big)\\
  & \vee \bigvee_{1\leq \,i<j\,\leq p}
    \big(\mu_{i,j}(x)\land\nu_{i,j}(y)\land edge(x,y) \big)\\ 
  & \vee \bigvee_{1\leq \,i<j\,\leq p}
    \big( \mu'_{i,j}(x)\land\nu'_{i,j}(y)\land\neg edge(x,y) \big)\\
  & \vee \bigvee_{1\leq \,j\,\leq 5k_0p} 
  \big(\sigma_j(x)\land\sigma_j^N(y) \big)
.\end{align*}

Clearly, $\psi(x,y)$ is independent of particular~$H\in\mathcal{C}$ and
depends only on the parameters~$k_0$ and $p$.
The construction of $G_H$ from $H$ and $k$ is finished in polynomial time and
it is also a simple routine to verify that~$H=I_{\psi}(G_H)$.
\end{proof}

This also finishes the proof of Theorem~\ref{thm:MCalgo}
via the fixed-parameter tractable algorithm of Seese~\cite{Seese96}.

\subsection{Successor-invariant FO}
\label{subsec:successor}

Model checking of successor-invariant FO properties is 
the subject of several recent papers such that 
Engelmann, Kreutzer and Siebertz~\cite{EKS12},
Ganian et al~\cite{GHKOST15},
Eickmeyer and Kawarabayashi~\cite{ek16},
van den Heuvel et al~\cite{hkpqrs17},
and others.

In a nutshell, a {\em successor} relation on a domain $X$ is simply a directed path 
on the vertex set~$X$ (in case of $X$ being the vertex set of a graph,
this successor relation is distinct from the graph edges).
An FO property over a successor-equipped relational structure is 
{\em successor-invariant} if its truth does not change when the same
structure is equipped with a different successor relation.

Since successor-invariant FO sentences are generally more expressive than 
FO sentences~\cite{Ros07}, it makes good sense to ask whether graph classes
with efficient FO model checking algorithms also admit efficient
successor-invariant FO model checking.
So far, the answers provided by the previously listed works are all positive.
Furthermore, since the known examples separating the expressive powers of plain
FO and successor-invariant FO are dense (containing large cliques),
and the previous works (except~\cite{GHKOST15}) studied sparse graphs,
it is especially relevant to ask the question of
successor-invariant FO in our dense case.

The answer here is again positive and plain easy, in fact, 
the following directly follows from our
Theorem~\ref{thm:decomposition} and the algorithm of~\cite{hkpqrs17}:

\begin{cor}
Let $\mathcal{C}$ be a $(k_0,p)$-near-uniform graph class 
for some \mbox{$k_0,p\in\mathbb N$}.
Then the successor-invariant FO model checking problem in $\mathcal{C}$ 
is fixed-parameter tractable when parameterized by the formula size.
\end{cor}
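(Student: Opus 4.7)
The plan is to lift the approach used for Theorem~\ref{thm:MCalgo} to the successor-invariant setting by combining Theorem~\ref{thm:decomposition} with the successor-invariant FO model checking algorithm of van den Heuvel et al.~\cite{hkpqrs17}, which is FPT on (labelled) graph classes of bounded degree.

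Given an input graph $H\in\mathcal{C}$ and a successor-invariant FO sentence $\phi$, the algorithm first searches, as in step~\ref{it:alg-k} of the algorithm in Section~\ref{sec:algorithm}, for some $k\leq k_0$ such that the near-$k$-twin relation of $H$ is an equivalence of index at most $p$ (such $k$ must exist by near-uniformity). Then, applying Theorem~\ref{thm:decomposition} for this $k$, it computes in polynomial time a labelled graph $G_H\in\mathcal{D}_{2k_0p}$ together with the universal interpretation formula $\psi(x,y)$ (depending only on $k_0$ and $p$) such that $H=I_{\psi}(G_H)$. The sentence $\phi$ is then translated into a sentence $\phi'$ over the vocabulary of $G_H$ (extended by the successor relation) by replacing every occurrence of $edge(x,y)$ with $\psi(x,y)$; the successor atoms in $\phi$ are left untouched, which is legitimate because $V(H)=V(G_H)$ and hence any successor relation on $V(H)$ is also a successor relation on $V(G_H)$.

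The crucial observation is that, for \emph{every} choice of a successor relation $S$ on $V(H)=V(G_H)$, the standard interpretation routine yields
\[
(H,S)\models\phi \;\Longleftrightarrow\; (G_H,S)\models\phi'.
\]
Since $\phi$ is successor-invariant on $H$, the right-hand side must be independent of the choice of $S$ as well; in other words, $\phi'$ is successor-invariant on $G_H$ (and, more generally, on any labelled graph obtained as $G_{H'}$ for $H'\in\mathcal{C}$). Therefore, model checking $\phi$ on $H$ reduces to successor-invariant FO model checking of $\phi'$ on a labelled graph of degree at most $2k_0p$, which can be performed in FPT time parameterized by $|\phi'|$ by the algorithm of~\cite{hkpqrs17}. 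Since $|\phi'|=O(|\phi|\cdot|\psi|)$ and $|\psi|$ depends only on $k_0$ and $p$, the total running time is of the form $f(|\phi|)\cdot|V(H)|^{O(1)}$ as required.

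The only step that requires a moment of care is verifying successor-invariance of $\phi'$ with respect to the extended vocabulary (including the auxiliary labels introduced in Theorem~\ref{thm:decomposition}); this, however, is immediate from the equivalence above together with the assumption that $\phi$ is successor-invariant, since the labels on $G_H$ are fully determined by $H$ and do not depend on~$S$. No new structural or logical work is required beyond what is already packaged in Theorem~\ref{thm:decomposition} and~\cite{hkpqrs17}.
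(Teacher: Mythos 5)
Your proof is correct and follows essentially the same route as the paper's: apply Theorem~\ref{thm:decomposition} to obtain $\psi$ and $G_H$ of bounded degree on the same vertex set, replace $edge$ by $\psi$ in $\phi$, observe that sharing a common domain means any successor relation for $H$ is one for $G_H$, and invoke the algorithm of~\cite{hkpqrs17}. Your additional note that $\phi'$ is itself successor-invariant on $G_H$ is a correct and slightly more explicit justification of a point the paper leaves implicit.
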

\begin{proof}
We proceed in the same way as previously.
To recapitulate, let $\phi$ be a successor-invariant FO sentence 
and $H\in\mathcal{C}$ an input graph.
By Theorem~\ref{thm:decomposition}, we get formula $\psi$ and compute
a graph $G_H\in\mathcal{D}_{2k_0p}$ such that $H=I_{\psi}(G_H)$.
Let $\phi'$ be obtained from $\phi$
by replacing every occurrence of $edge(z,z')$ with~$\psi(z,z')$.
Now, it is important that the domain of $G_H$ is the same as the domain
of~$H$, and so any successor relation on $G_H$ is a successor relation
on~$H$ as well.
Consequently, for any successor-equipped $H$ and $G_H$ we have
$H \models \phi$ if and only if $G_H \models \phi'$
(as with previous plain FO logic).

Hence it remains to solve in FPT the successor-invariant FO model checking
problem of graphs of bounded degree.
This can be done by the algorithm of~\cite{hkpqrs17}
(which handles more generally graph classes with bounded expansion).
\end{proof}

\section{Interpretability of graphs of bounded degree}
\label{sec:interpretability}

Having defined near-uniform graph classes and shown that these classes can be
FO interpreted in graph classes of bounded degree, it is a natural question
to ask what is the exact relationship between those kinds of classes. As it
turns out, we can prove (Theorem~\ref{thm:characterization}) that each class
FO interpretable in a class of graphs of bounded degree is indeed
near-covered and therefore also near-uniform. 
Thus, near-uniform graph classes are exactly those graph
classes which are FO interpretable in graph classes of bounded degree.
This result can then be easily extended to the more general case of
transductions of graph classes of bounded degree, which again result in
near-uniform graph classes (Theorem~\ref{thm:transductiond}).

\subsection{Adjusted Gaifman's theorem}

In the proof of the main result of this section we use the famous
Gaifman's locality
theorem~\cite{Gaifman82} (see also~\cite{lib04}) about the local nature of the FO logic. However, for our purposes we
need a specific variant of this theorem. To keep the paper
self-contained, in this section we first recap the notation and statement of
Gaifman's theorem and then state and prove a corollary tailored to our needs. 

An FO formula $\phi(x_1, \ldots, x_l)$ is $r$-\emph{local}, sometimes denoted by
$\phi^{(r)}(x_1, \ldots, x_l)$, if for every graph $G$ and all $v_1, \ldots,
v_l \in V(G)$ it holds
$G \models \phi(v_1, \ldots, v_l) \Longleftrightarrow \bigcup_{1 \le i \le l}
N_r^{G}(v_i) \models \phi(v_1, \ldots, v_l)$, where $N_r^{G}(v)$ is the subgraph
of $G$ induced by $v$ and all vertices of distance at most $r$ from $v$.

\begin{thm}[Gaifman's locality theorem]\label{thm:Gaifman}
Every first-order formula with free variables $x_1, \ldots, x_l$ is equivalent to a
Boolean combination of the following
\begin{itemize}
\item local formulas $\phi^{(r)}(x_1, \ldots, x_l)$ around $x_1, \ldots, x_l$,
and
\item basic local sentences, i.e. sentences of the form
\end{itemize}
$$ 
	\exists x_1 \ldots \exists x_k	\left(\bigwedge_{1 \le i < j \le k} 
		dist(x_i,x_j) > 2r 
	\land \bigwedge_{1 \le i \le k} \phi^{(r)}(x_i) \right)
.$$
\end{thm}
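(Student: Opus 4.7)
The plan is to prove the statement by induction on the structure of the formula, with the quantifier rank as the inductive parameter. The base case of quantifier-free formulas is immediate: such a formula depends only on the atomic relations among the free variables and is therefore trivially $0$-local. Boolean combinations preserve the form we want, so the only nontrivial step is handling the existential quantifier (the universal case being dual). Thus the task reduces to showing that if $\psi(x_1, \ldots, x_l, y)$ admits a Gaifman-style normal form, then so does $\exists y\, \psi(x_1, \ldots, x_l, y)$.

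For the inductive step I would first apply the inductive hypothesis to $\psi$ to express it as a Boolean combination of $r$-local formulas around $x_1, \ldots, x_l, y$ and of basic local sentences, for some radius $r$. After putting the combination in disjunctive normal form and distributing $\exists y$ across disjunctions, it suffices to handle a single conjunction $\exists y\, \bigl(\bigwedge_i \phi_i^{(r)}(\bar x, y) \wedge \chi\bigr)$, where $\chi$ collects the sentences and does not mention $y$. The heart of the proof is then a case split based on the distance from $y$ to $\{x_1, \ldots, x_l\}$. In the near case, where $y$ is within distance $2r+1$ of some $x_k$, the neighbourhood $N_r(y)$ is contained in $N_{3r+1}(x_k)$ and the whole conjunction can be rewritten as a single local formula around $\bar x$ of the larger radius $3r+1$. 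In the far case, where $y$ is at distance strictly greater than $2r+1$ from every $x_i$, the $r$-neighbourhood of $y$ is disjoint from each $N_r(x_i)$, so the conjuncts decouple cleanly into those depending only on $y$'s neighbourhood, which assemble into a single local formula $\theta^{(r)}(y)$, and those depending only on the $x_i$, which become local around $\bar x$.

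It remains to capture the existence of such a far-away $y$ in the required vocabulary. The plain sentence $\exists y\, \theta^{(r)}(y)$ is itself already a basic local sentence with $k=1$; the additional requirement that the witness be far from every $x_i$ can be enforced by Boolean combination with a local formula around $\bar x$ that rules out all witnesses lying within radius $2r+1$ of some $x_i$. The main obstacle is precisely this bookkeeping step: one must correctly combine the ``witness exists somewhere'' basic local sentence with the ``no near witness'' local formula around $\bar x$, and must simultaneously track how the locality radius grows through each quantifier (typically like $r_{q+1} = 3 r_q + 1$), ensuring that after DNF expansion all clauses fit the required Boolean form. The conceptual picture is clean, but verifying Boolean closure and uniformly accounting for the radii across all clauses is what makes the full proof lengthy.
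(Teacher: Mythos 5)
The paper does not prove Gaifman's theorem: it states it as a classical result and cites Gaifman's original paper and Libkin's textbook, so there is no in-paper proof to compare your sketch against. Judged on its own terms, your sketch follows the standard strategy (induction on quantifier rank, normalization to DNF, case split on whether the existential witness $y$ is near or far from $x_1,\ldots,x_l$), and the near case is handled correctly, but the far case contains a genuine gap.

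You propose to express ``there exists $y$ far from every $x_i$ with $\theta^{(r)}(y)$'' as the conjunction of the basic local sentence $\exists y\,\theta^{(r)}(y)$ with a local formula around $\bar x$ ruling out any witness within radius $2r+1$ of some $x_i$. That conjunction fails precisely when the graph contains \emph{both} a near witness and a far witness: the target formula is then true, while your combination is false because a near witness exists. Your construction thus only captures the subcase in which every witness is far, and is not logically equivalent to what you need. This is exactly the obstruction that forces the use of basic local sentences with $k>1$ scattered existential variables. The standard repair is a scattering (pigeonhole) argument: if there are sufficiently many (roughly $l+1$) pairwise-far witnesses of $\theta^{(r)}$, then since each $x_i$ can be close to at most one of them, some witness must be far from all of $\bar x$; and in the complementary case, where every maximal scattered set of witnesses has bounded size, all witnesses are trapped near a bounded set of centres and the far-witness condition can be localised around $\bar x$ together with those centres. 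This dichotomy on the size of scattered witness sets, rather than the near/no-near dichotomy in your sketch, is the actual crux of Gaifman's theorem and the reason the basic local sentences quantify over $k$ mutually distant points rather than a single one.
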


For a given $q$, the set of semantically different FO formulas $\phi$ of
quantifier rank $qr(\phi)\leq q$ with one free variable is finite. Clearly, this also holds for local FO formulas, as they are a special case of FO formulas.  
For a vertex $v$ of a graph $G$, we define its local logical FO $(\rho,r)$-type as
$\text{tp}^G_{\rho,r}(v) = \{\phi^{(r)}(x)~|~G \models \phi^{(r)}(v) \text{ and
}  qr(\phi) \le \rho \}$.

It can be derived from Gaifman's theorem that if two vertices $u$ and $v$
are far apart in the graph, then whether $\psi(u,v)$ holds true depends only
on the logical $(\rho,r)$-type of $u$ and $v$, where $q$ and $r$ depend on~$\psi$. 
This finding is formalized by the following (folklore) corollary of
Theorem~\ref{thm:Gaifman}; as we were not able to find this precise formulation in
the literature, we also provide a proof, for the sake of completeness.

\begin{cor}
\label{cor:our_gaifman}
For every FO formula $\psi(y,z)$ of two free variables
there exist integers $r$ and $\rho$ such that the following holds true
for any graph $G$:
If $u,v_1,v_2 \in V(G)$ such that
$\text{dist}(u,v_1)>2r$, $\text{dist}(u,v_2)>2r$ and\/ 
{\rm$\text{tp}^G_{\rho,r}(v_1) = \text{tp}^G_{\rho,r}(v_2)$},
then \mbox{$G \models \psi(u,v_1)$} if and only if $G \models \psi(u,v_2)$.
\end{cor}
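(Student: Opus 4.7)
The plan is to apply Gaifman's theorem to $\psi(y,z)$ and exploit the large distance from $u$ to each $v_j$ to split the contributions of the two free variables.

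First, I invoke Theorem~\ref{thm:Gaifman} on $\psi(y,z)$ to obtain a radius $r$ and a Boolean combination $B(\psi)$ equivalent to $\psi$ whose atoms are either $r$-local formulas $\phi_i^{(r)}(y,z)$ around the free variables, or basic local sentences. The basic local sentences depend only on $G$, not on the tuple being substituted, so they contribute the same truth value to $\psi(u,v_1)$ and to $\psi(u,v_2)$. Hence it suffices to show, for each atom $\phi^{(r)}(y,z)$ occurring in $B(\psi)$, that $G\models\phi^{(r)}(u,v_1)$ iff $G\models\phi^{(r)}(u,v_2)$.

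Fix such an atom $\phi^{(r)}(y,z)$. Since $\text{dist}(u,v_j)>2r$ for $j\in\{1,2\}$, the neighbourhoods $N_r^G(u)$ and $N_r^G(v_j)$ are vertex-disjoint, and by $r$-locality
\[
G\models \phi^{(r)}(u,v_j) \;\Longleftrightarrow\; N_r^G(u)\sqcup N_r^G(v_j) \models \phi^{(r)}(u,v_j).
\]
On a disjoint union of two pointed graphs, any FO formula in the two variables $y,z$ of quantifier rank at most $q$ is equivalent to a Boolean combination $\bigvee_k \bigl(\alpha_k(y)\wedge\beta_k(z)\bigr)$, where the $\alpha_k$ and $\beta_k$ range over FO formulas of quantifier rank at most $q$ evaluated in the component holding the respective variable. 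This is the standard Feferman--Vaught composition for FO, most easily verified through an Ehrenfeucht--Fra\"{\i}ss\'e argument in which Duplicator's winning strategy on the disjoint union is assembled from her winning strategies on the two components. Relativizing the quantifiers of each $\alpha_k,\beta_k$ to the appropriate $r$-ball in $G$ yields $r$-local one-variable formulas $\alpha_k^{(r)}(y)$ and $\beta_k^{(r)}(z)$ in $G$, of quantifier rank at most $qr(\psi)$.

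Now I set $\rho := qr(\psi)$ and keep the $r$ from above. The terms $\alpha_k^{(r)}(u)$ are identical in the two comparisons, while by the hypothesis $\text{tp}^G_{\rho,r}(v_1)=\text{tp}^G_{\rho,r}(v_2)$ each $\beta_k^{(r)}$ satisfies $G\models\beta_k^{(r)}(v_1)$ iff $G\models\beta_k^{(r)}(v_2)$. Combining these for every $\phi^{(r)}$ in $B(\psi)$ and reassembling the Boolean combination yields the desired equivalence. The genuinely delicate step will be the Feferman--Vaught decomposition: one has to check that the quantifier rank of the one-variable pieces $\alpha_k,\beta_k$ does not exceed $qr(\psi)$, for otherwise equality of $(\rho,r)$-types would not suffice to conclude equality of truth values; everything else is bookkeeping.
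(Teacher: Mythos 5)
Your proof is correct and follows essentially the same route as the paper: apply Gaifman's theorem, note that basic local sentences contribute a constant, use the distance hypothesis to split into a disjoint union of $r$-balls, and invoke a Feferman--Vaught/Ehrenfeucht--Fra\"{\i}ss\'e composition to reduce each local atom $\phi^{(r)}(u,v_j)$ to the local type of $v_j$. The one small slip is fixing $\rho := qr(\psi)$: the local formulas $\phi^{(r)}$ produced by Gaifman's theorem may have quantifier rank exceeding $qr(\psi)$, so $\rho$ should instead be taken large enough to bound the quantifier rank of those local formulas --- a quantity still determined by $\psi$ alone, so the existence claim is unaffected and the argument goes through unchanged.
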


\begin{proof}
Let $\psi(y,z)$ be a formula, $G$ a graph and $u,v_1,v_2 \in V(G)$ as in the
statement of the Corollary.  By Theorem~\ref{thm:Gaifman}, $\psi(y,z)$ is equivalent
to a Boolean combination of local formulas $\phi^{(r)}(y,z)$ around $y$ and
$z$ and basic local sentences.  The validity of $\psi(y,z)$ for any choice
of $y$ and $z$ therefore depends only on local formulas $\phi^{(r)}(y,z)$
around $y$ and $z$ (because the validity of basic local sentences is
independent of the choice of $y$ and $z$).  Thus, whether $G \models
\psi(u,v_1)$ holds true depends only on formulas $\phi^{(r)}(u, v_1)$ evaluated
on the graph induced by $N_r^{G}(u) \cup N_r^{G}(v_1)$.

Because $\text{dist}(u,v_1)>2r$, this graph is actually a disjoint union of
the graphs induced by $N_r^{G}(u)$ and $N_r^{G}(v_1)$.  By the standard
Ehrenfeucht-Fraisse games argument, validity of $N_r^{G}(u) \cup N_r^{G}(v_1)
\models \phi^{(r)}(u,v_1)$ is then fully determined by the types
$\text{tp}^G_{\rho,r}(u)$ and $\text{tp}^G_{\rho,r}(v_1)$ of $u$ and $v_1$
respectively.  The same reasoning can be applied to $u$ and $v_2$, and since
$\text{tp}^G_{\rho,r}(v_1) = \text{tp}^G_{\rho,r}(v_2)$, the result follows.
\end{proof}

\subsection{Characterization of interpretations}

The following theorem provides us with a strong characterization of the
classes FO interpreted in graphs of degree at most~$d$,
in terms of near-$k$-twin relation and being near-covered. It amounts to, in
an essence, the ``opposite direction'' to Theorem~\ref{thm:decomposition}.

\begin{thm}
\label{thm:characterization}
Let $\mathcal{D}_d$ be the class of (finite) graphs with maximum degree at
most $d$ and let $\psi(x,y)$ be an FO formula with two free variables.  
Then there exist $\ell$ and $q$, depending only on $d$ and $\psi$, such that
every graph $H \in I_{\psi}(\mathcal{D}_d)$ is {$(\ell,q)$-near-covered}.
That is, there exists a set $S=\{v_1, \ldots, v_q\} \subseteq V(H)$ 
such that every $u \in V(H)$ is a near-$\ell$-twin of at least one element $v_i$ of~$S$.
\end{thm}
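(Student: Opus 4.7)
The plan is to derive Theorem~\ref{thm:characterization} as a direct application of Corollary~\ref{cor:our_gaifman}. Fix any $G \in \mathcal{D}_d$ with $H = I_\psi(G)$ (so that $V(H) = V(G)$), and let $r$ and $\rho$ be the integers supplied by Corollary~\ref{cor:our_gaifman} for the formula $\psi$; they depend only on $\psi$. Since there are only finitely many semantically distinct FO formulas in one free variable of quantifier rank at most $\rho$ (over the finite signature used by $\psi$), the number $t$ of realizable local $(\rho,r)$-types is a finite constant depending only on $\psi$.

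Set $q := t$ and $\ell := 2\sum_{j=0}^{2r} d^j$; both quantities depend only on $d$ and $\psi$. From each $(\rho,r)$-type occurring on $V(G)$ pick a single representative, obtaining a set $S = \{v_1,\ldots,v_q\} \subseteq V(G) = V(H)$ (padding arbitrarily with repeats if fewer than $q$ types are realized). Given any $u \in V(H)$, choose $v_i \in S$ with $\text{tp}^G_{\rho,r}(u) = \text{tp}^G_{\rho,r}(v_i)$. I claim that $u$ and $v_i$ are near-$\ell$-twins in $H$, which will exhibit $H$ as $(\ell,q)$-near-covered.

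To verify the claim, consider any $w \in V(G)$ lying outside the ``bad zone'' $B^G_{2r}(u) \cup B^G_{2r}(v_i)$, where $B^G_s(\cdot)$ denotes the ball of radius $s$ in~$G$. For such $w$ we have $\text{dist}^G(w,u) > 2r$ and $\text{dist}^G(w,v_i) > 2r$, so Corollary~\ref{cor:our_gaifman} (applied with $w$ playing the role of its first free variable) yields $G \models \psi(w,u) \iff G \models \psi(w,v_i)$. Because $\psi$ is symmetric and $H = I_\psi(G)$, this translates into $wu \in E(H) \iff wv_i \in E(H)$, so that $N^H(u) \symdiff N^H(v_i) \subseteq B^G_{2r}(u) \cup B^G_{2r}(v_i)$. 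Using $\Delta(G) \le d$ we have $|B^G_{2r}(x)| \le \sum_{j=0}^{2r} d^j = \ell/2$ for every $x$, whence $|N^H(u) \symdiff N^H(v_i)| \le \ell$, as required.

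Once Corollary~\ref{cor:our_gaifman} is in hand, the whole argument is essentially bookkeeping, so there is no genuine obstacle; the only point that deserves attention is that the $(\rho,r)$-types are computed in the graph $G$ (where we have bounded degree and hence control over distances and ball sizes), while the near-$\ell$-twin conclusion is drawn in the derived graph $H$, the two being linked via the vertex-set identification $V(H) = V(G)$ and the symmetry of~$\psi$.
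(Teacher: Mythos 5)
Your proof is correct and follows essentially the same strategy as the paper: fix $G$ with $H=I_\psi(G)$, apply Corollary~\ref{cor:our_gaifman}, pick one representative vertex per realized $(\rho,r)$-type, and observe that the symmetric difference $N^H(u)\symdiff N^H(v_i)$ is confined to small balls in $G$. In fact your constants are slightly more careful than those printed in the paper, which loosely writes ``more than $r$ far'' and ``at most $2d^r$ such vertices,'' whereas the hypothesis of Corollary~\ref{cor:our_gaifman} really requires distance $>2r$ and hence a ball-size bound of $\sum_{j=0}^{2r} d^j$ per vertex, exactly as you use.
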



\begin{proof}
Let $H \in I_{\psi}(\mathcal{D}_d)$ and $G\in\mathcal{D}_d$ be such that $H=I_{\psi}(G)$.
Recall that $V(H)=V(G)$ and $\{u,v\}\in E(H)$ if and only if $G\models\psi(u,v)$.
Fixing $G$ and $H$, we say that a vertex $x\in V(H)$ is {\em$a$-far from $y\in V(H)$}
if the graph distance from $x$ to $y$ in $G$ is greater than~$a$.

Let $\rho$ and $r$ be values obtained from application of
Corollary~\ref{cor:our_gaifman} to $\psi$.  Let $T$ be the set of all
possible FO $(\rho,r)$-types (this set is finite for any $\rho$ and $r$) and let
$T_G\subseteq T$ be the set of all FO $(\rho,r)$-types realized in $G$.  
Set $q: = |T_G|$ and $\ell:= 2d^r$.  For every $t \in T_G$ we pick one vertex
in $G$ (and also in $H$ since $V(H)= V(G)$) which realizes type $t$ 
to obtain $S:= \{v_1,\ldots, v_q\}$.  
We claim that every vertex $u$ of $H$ is near-$\ell$-twin of some
vertex in~$S$.  For every vertex $u$ of $H$, there is a
vertex $v_i$ in $S$ with the same $(\rho,r)$-type (in $G$).  By
Corollary~\ref{cor:our_gaifman}, for every $w$ which is more than $r$ far
from both $u$ and $v_i$, it holds that $G \models \psi(w,u)$ if and only if
$G \models \psi(w,v_i)$.  This means that $u$ and $v_i$ will have the same
adjacency to all vertices which are more than $r$ far from both of them. 
Consequently, their neighbourhoods in $H$ can only differ in vertices which
are at most $r$ far from either of them, and there are at most $2d^r$ such
vertices.
\end{proof}

\subsection{Characterization of transductions}
\label{subsec:transduction}

Besides the simplified case of interpretation from
Theorem~\ref{thm:characterization}, we can go much further with a bit of
additional effort, as carried out in the following claim.
\begin{thm}
\label{thm:transductiond}
Let $\mathcal{D}_d$ be the class of (finite) graphs with maximum degree at
most $d$ and let $\tau$ be an FO transduction.
Then there exist $\ell$ and $q$, depending only on $d$ and $\tau$, such that
every graph $H \in \tau(\mathcal{D}_d)$ is {$(\ell,q)$-near-covered}.
\end{thm}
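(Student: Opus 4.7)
The plan is to reduce the proof of Theorem~\ref{thm:transductiond} to the argument already used for Theorem~\ref{thm:characterization}, by unfolding the definition of an FO transduction and observing that the intermediate relational structure still has bounded-degree Gaifman graph. Concretely, suppose $H \in \tau(\mathcal{D}_d)$, so there exists $G \in \mathcal{D}_d$ and some $p$-parameter expansion $G^+$ of $G$ such that $H = \tau_0(\gamma(G^+))$, where $\tau_0 = (\chi,\nu,\mu)$. Write $G^* := \gamma(G^+)$; this is a relational structure over $V(G) \times \{1,\dots,m\}$ equipped with $p$ unary labels (from $\varepsilon$), the unary labels $Q_1,\dots,Q_m$, the binary relation $\sim$, and a graph edge relation whose edges lie inside individual copies of $G$.

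The first key observation is that the Gaifman graph of $G^*$ has maximum degree at most $d + (m-1)$: inside each copy every vertex has at most $d$ graph-neighbours, and the relation $\sim$ contributes at most $m-1$ further neighbours (the matching across copies). Thus $G^*$ behaves, for all locality purposes, like a graph in $\mathcal{D}_{d+m-1}$ over an enriched signature whose size depends only on $\tau$. Now $V(H) = \{v \in V(G^*) : G^* \models \nu(v)\}$ and $\{u,v\} \in E(H)$ iff $G^* \models \mu(u,v)$ (we may assume $\mu$ is symmetric and irreflexive). Apply the version of Corollary~\ref{cor:our_gaifman} to the formula $\mu(x,y)$ evaluated on $G^*$ to obtain integers $\rho$ and $r$, depending only on $\mu$ (hence only on $\tau$).

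With these in hand, let $T_{G^*}$ be the set of local FO $(\rho,r)$-types realized in $G^*$ by vertices of $V(H)$. Since the signature is fixed by $\tau$ and $\rho,r$ depend only on $\tau$, the cardinality $|T_{G^*}|$ is bounded by a number $q$ that depends only on $d$ and $\tau$. For each realized type choose one representative vertex in $V(H)$; call this set $S = \{v_1,\dots,v_q\}$. For any $u \in V(H)$, its type equals the type of some $v_i \in S$, so by Corollary~\ref{cor:our_gaifman} (applied in $G^*$) we have $G^* \models \mu(w,u) \iff G^* \models \mu(w,v_i)$ for every $w \in V(H)$ at $G^*$-distance greater than $r$ from both $u$ and $v_i$. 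Consequently, the neighbourhoods $N^H(u)$ and $N^H(v_i)$ can differ only on vertices lying within $G^*$-distance $r$ of $u$ or $v_i$, and there are at most $2(d+m-1)^r$ such vertices because the Gaifman graph of $G^*$ has bounded degree. Setting $\ell := 2(d+m-1)^r$, we conclude that $u$ is a near-$\ell$-twin of $v_i$ in $H$, which shows $H$ is $(\ell,q)$-near-covered.

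The main obstacle I expect is checking that Gaifman's theorem (and the derivation of Corollary~\ref{cor:our_gaifman}) still applies in the enriched relational signature arising from $\gamma \circ \varepsilon$; this is routine since Gaifman's theorem holds for arbitrary finite relational signatures, provided we recompute $r$ and $\rho$ from $\mu$ in that signature and use the Gaifman graph of the full structure (including edges induced by $\sim$) for the locality radii. The other mild care needed is that the domain restriction $\nu$ of $\tau_0$ merely shrinks $V(H)$; it does not affect the argument since we only need representatives and near-twin witnesses from within $V(H)$, and the adjacency formula $\mu$ is the sole gateway between $G^*$ and $E(H)$.
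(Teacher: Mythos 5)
Your proof is correct, and it takes a genuinely different route from the paper's. The paper reduces the transduction to a plain FO \emph{graph} interpretation in a bounded-degree class: it materialises the $\sim$ relation of the copy operation by inserting, for each $v\in V(G)$, a new star-centre vertex $v_0$ carrying a fresh label $R$ and joining it to the $m$ copies of $v$; this makes $u\sim v$ definable as $\exists t\,(R(t)\wedge edge(u,t)\wedge edge(v,t))$, produces a graph $G_3\in\mathcal{D}_{\max(d+1,m)}$, and lets the authors invoke Theorem~\ref{thm:characterization} as a black box. They then pay for this by having to argue away the spurious isolated vertices that the auxiliary $R$-vertices (and the $\nu$-rejected vertices) introduce into $I_\psi(G_3)$. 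You instead observe that the intermediate structure $G^*=\gamma(\varepsilon(G))$ itself has a Gaifman graph of maximum degree at most $d+m-1$ (at most $d$ in-copy graph edges plus $m-1$ edges from $\sim$), and that Gaifman's locality theorem and hence Corollary~\ref{cor:our_gaifman} hold verbatim over any fixed finite relational signature with distances measured in the Gaifman graph; you then rerun the type-counting argument of Theorem~\ref{thm:characterization} directly on $G^*$ with the formula $\mu$. Your approach dispenses with both the $R$-star gadget and the isolated-vertex bookkeeping and is arguably cleaner; what it costs is the (routine but real) acknowledgment that the Gaifman machinery, which the paper develops only for graphs, must be restated for the enriched signature of $G^*$. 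One small inherited imprecision: Corollary~\ref{cor:our_gaifman} requires distance greater than $2r$, so the neighbourhood-difference bound should be in terms of $(d+m-1)^{2r}$ rather than $(d+m-1)^r$ (the paper's proof of Theorem~\ref{thm:characterization} has the same slip); this changes nothing since $\ell$ remains a function of $d$ and $\tau$ only.
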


\begin{proof}
Our strategy is to prove that there exist an integer $d'$,
an FO formula $\psi(x,y)$ and a class of labelled graphs 
$\mathcal{G}\subseteq\mathcal{D}_{d'}$, 
all depending on $d$ and $\tau$, such that the following holds:
for every $H \in \tau(\mathcal{D}_d)$ there exists a graph
$H'\in I_\psi(\mathcal{G})$
such that $H'$ is obtained from $H$ by adding isolated vertices.

Assuming the previous for a moment, we show how it implies our theorem.
By Theorem~\ref{thm:characterization}, there exist $\ell_1$ and $q$
such that, for every graph $H'\in I_\psi(\mathcal{G})$, the following holds:
there exists $S_1=\{v_1, \ldots, v_q\} \subseteq V(H')$ such that 
every $u \in V(H')$ is a near-$\ell_1$-twin of at least one element of~$S_1$.
We may assume that at most one element of $S_1$, say $v_1$,
is among the added isolated vertices of $H'$
(hence $S_1\setminus V(H)\subseteq\{v_1\}$).
If $v_1\not\in V(H)$ and there exists a vertex $w_1\in V(H)$ which is 
a near-$\ell_1$-twin of~$v_1$,
then $w_1$ is a near-$2\ell_1$-twin of every near-$\ell_1$-twin of~$v_1$,
and so $S:=(S_1\cap V(H))\cup\{w_1\}$ witnesses that $H$ is
$(2\ell_1,q)$-near-covered.
Otherwise, $H$ is $(\ell_1,q)$-near-covered by~$S:=S_1\cap V(H)$.


\smallskip
It remains to define an appropriate class $\mathcal{G}$ and
the formula $\psi$ as claimed above.

From the definition of FO transduction, let
$\tau=\tau_0\circ\gamma\circ\varepsilon$ where $\tau_0$ is a basic
transduction, $\gamma$ is a $m$-copy operation for some $m$,
and $\varepsilon$ is a $p$-parameter expansion for some~$p$.
We start with setting $d'=\max(d+1,m)$, 
and $\mathcal{G}_1=\varepsilon(\mathcal{D}_d)$.
For every $G_1\in\mathcal{G}_1$, we take $G_2=\gamma(G_1)$.
Then, for every $v\in V(G_1)$, we add to $G_2$ a new vertex $v_0$
of a new label~$R$ (the same for each added $v_0$),
and $m$ edges from $v_0$ to the $m$ copies of $v$ in~$G_2$
(making a star $K_{1,m}$ with the new centre $v_0$).
The resulting graph $G_3$ has $(m+1)\cdot|V(G)|$ vertices and maximum
degree~$d'$.
Finally, after formally erasing the relation $\sim$ of $\gamma$, 
we add $G_3$ to~$\mathcal{G}$.

Regarding the formula $\psi(x,y)$, we recall that the basic transduction 
$\tau_0$ underlying $\tau$ is determined by a triple of FO formulas $(\chi,\nu,\mu)$, 
where the role of $\chi$ can be safely ignored for now.
We make formulas $\nu'(x)$ from $\nu(x)$, and $\mu'(x,y)$ from $\mu(x,y)$,
by restricting every quantifier to the vertices
{\em not} of label $R$ and replacing each occurrence of the predicate
$u\sim v$ (recall that $\gamma$ has been erased from~$G_3$)
with $\exists t\big(R(t)\wedge edge(u,t)\wedge edge(v,t)\big)$.
Then we set
\begin{equation}\label{eq:psitrans}
\psi(x,y) \equiv \neg R(x)\land\nu'(x) \land
	\neg R(y)\land\nu'(y) \land \mu'(x,y)
\,.\end{equation}

Pick now any $H \in \tau(\mathcal{D}_d)$, and let 
$G_1\in\mathcal{G}_1=\varepsilon(\mathcal{D}_d)$ be the
corresponding graph such that $G_2=\gamma(G_1)$ and $H=\tau_0(G_2)$.
Let $G_3\in\mathcal{G}$ be constructed from $G_2$ as above.
By $\tau_0$ and \eqref{eq:psitrans}, every vertex of $I_{\psi}(G_3)$
not in $V(H)$ is isolated in $I_{\psi}(G_3)$.
Moreover, by the construction of $\psi'$
 for every two vertices $u,v\in V(H)$ it holds
$G_2\models\mu(u,v)$ $\iff$ $G_3\models\mu'(u,v)$.
Hence, by \eqref{eq:psitrans}, $I_{\psi}(G_3)$ results from $H$ by adding
isolated vertices and we can set $H'=I_{\psi}(G_3)$, as desired.
The proof is finished.
\end{proof}

%

Putting together the results of Theorems~\ref{thm:decomposition}
and~\ref{thm:transductiond}, we easily get also the following corollary
which is interesting on its own:

\begin{cor}\label{cor:robust}
Let $\mathcal{C}$ be a near-uniform graph class, and $\tau$ be an FO
transduction.
Then the class $\tau(\mathcal{C})$ is again a near-uniform graph class.
\end{cor}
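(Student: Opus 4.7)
The plan is to chain together the two main theorems of the preceding sections. First, since $\mathcal{C}$ is near-uniform, I would invoke Theorem~\ref{thm:decomposition} to obtain integers $k_0,p$ and a single FO formula $\psi$ such that $\mathcal{C} \subseteq I_{\psi}(\mathcal{D}_{2k_0p})$; thus every $H \in \mathcal{C}$ is already the $\psi$-image of some graph of maximum degree at most $d := 2k_0p$.

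Next, I would observe that the interpretation $I_\psi$ is itself a (trivial) FO transduction --- the basic transduction with $(\chi,\nu,\mu) = (\mathrm{true},\mathrm{true},\psi)$, no copying and no parameter expansion --- and that the class of FO transductions is closed under composition. Hence $\tau \circ I_\psi$ is again an FO transduction, and
\[
\tau(\mathcal{C}) \;\subseteq\; \tau\bigl(I_{\psi}(\mathcal{D}_{d})\bigr) \;=\; (\tau \circ I_{\psi})(\mathcal{D}_{d}),
\]
so $\tau(\mathcal{C})$ is contained in the transduction-image of a bounded-degree class. Applying Theorem~\ref{thm:transductiond} to the FO transduction $\tau \circ I_\psi$ and the class $\mathcal{D}_{d}$ yields integers $\ell,q$ (depending only on $d$, $\psi$ and $\tau$, hence ultimately only on $\mathcal{C}$ and $\tau$) such that every member of $\tau(\mathcal{C})$ is $(\ell,q)$-near-covered. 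Finally, Lemma~\ref{lem:unifrom-covered} promotes this to the desired near-uniformity of $\tau(\mathcal{C})$.

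The main obstacle --- really the only point not purely bookkeeping --- is verifying that $\tau \circ I_\psi$ genuinely fits the paper's decomposition $\tau_0' \circ \gamma' \circ \varepsilon'$. Concretely, I would take $\varepsilon' = \varepsilon$ and $\gamma' = \gamma$ (these operate on the vertex set together with its labels, and transfer unchanged to the pre-image because $V(I_\psi(G)) = V(G)$ so every parameter expansion of $H$ is also a parameter expansion of $G$), and then construct $\tau_0'$ from $\tau_0 = (\chi,\nu,\mu)$ by textually substituting $\psi(u,v)$ for every atomic occurrence of $edge(u,v)$ appearing in the formulas, restricted to pairs $u,v$ lying in the same copy produced by $\gamma$ (pairs between distinct copies were non-adjacent already). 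Once this routine syntactic check is carried out, $\tau \circ I_\psi$ is an FO transduction in the paper's sense and the three-step argument above goes through.
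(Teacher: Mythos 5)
Your proposal is correct and follows exactly the paper's own route: invoke Theorem~\ref{thm:decomposition} to realize $\mathcal{C}$ as $I_\psi(\mathcal{D}_d)$, view $I_\psi$ as the basic transduction $(\mathrm{true},\mathrm{true},\psi)$, use closure of FO transductions under composition to reduce to Theorem~\ref{thm:transductiond} applied to $\tau\circ I_\psi$ on $\mathcal{D}_d$, and finish with Lemma~\ref{lem:unifrom-covered}. One small caution on the optional last paragraph, where you attempt to spell out the composition by hand: the ``textual substitution'' of $\psi(u,v)$ for $edge(u,v)$ inside $\tau_0$ is not quite enough, because after the $m$-copy operation $\gamma$ the quantifiers inside $\psi$ would range over all of $G^m$ rather than the single copy containing $u,v$; one must additionally relativize $\psi$ to the predicate $Q_i$ picking out that copy. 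This does not affect the argument -- the paper simply appeals to the standard fact that FO transductions are closed under composition, which is all that is actually needed.
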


\begin{proof}
By Theorem~\ref{thm:decomposition},
there exists an FO formula $\psi(x,y)$
such that $\mathcal{C} \subseteq I_{\psi}(\mathcal{D}_{d})$
for suitable degree bound $d$ depending on~$\mathcal{C}$.
Let $\tau_1$ be the corresponding basic transduction 
(determined by $(true,true,\psi)\,$).
Then $\tau(\mathcal{C})\subseteq \tau(\tau_1(\mathcal{D}_{d}))$, and since
$\tau\circ\tau_1$ is again a transduction by transitivity,
by Theorem~\ref{thm:transductiond} every graph of
$\tau(\tau_1(\mathcal{D}_{d}))$ is $(\ell,q)$-near-covered.
Consequently, using Lemma~\ref{lem:unifrom-covered}, 
$\tau(\mathcal{C})$ is also a near-uniform graph class.
\end{proof}

\section{Hardness of recognizing an interpretation}
\label{sec:hardness}

Recall the aforementioned result \cite{mms94}
claiming that it is NP-hard to decide whether a given graph is a 
{\em square} of some graph.
The square of a graph can be straightforwardly described by an FO
interpretation with $\psi_s(x,y)\equiv edge(x,y)\vee[x\not=y\wedge\exists z
 (edge(x,z)\wedge edge(z,y))]$, 
expressing that edges of the square are original edges or pairs at distance
exactly two.

In our context, \cite{mms94} hence means that there
exist a graph class $\mathcal{C}$ and an FO formula $\psi(x,y)$ such that
the problem, for a given graph $H\in I_{\psi}(\mathcal{C})$,
to find $G\in\mathcal{C}$ such that $H=I_{\psi}(G)$ is not efficiently solvable
(unless P=NP).
Though, the reduction of \cite{mms94} requires
a class $\mathcal{C}$ of unbounded maximum degree while we are primarily
interested in interpretations of the classes $\mathcal{D}_d$ of graphs of degrees
at most~$d$.
Here we show a straightforward alternative reduction working already with the class
$\mathcal{D}_3$ of graphs of degree at most~$3$.

Notice that such a result is not in a contradiction with
Theorem~\ref{thm:decomposition} since each of the two results speaks about
a different particular formula(s)~$\psi$.

\begin{thm}\label{thm:recognizeinthard}
Let $\mathcal{D}_3$ denote the class of graphs of degree at most~$3$.
There exists an FO formula $\psi_0(x,y)$ such that
the problem, for a given graph $H\in I_{\psi_0}(\mathcal{D}_3)$,
to find $G\in\mathcal{D}_3$ such that $H=I_{\psi_0}(G)$ is NP-hard.
\end{thm}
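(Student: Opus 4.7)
The plan is to establish NP-hardness by reduction from a well-known NP-complete problem that lives naturally on bounded-degree structures, such as 3-SAT with bounded variable occurrence, or an NP-complete problem on cubic graphs (e.g.\ Holyer's 3-edge-coloring or Hamiltonicity of cubic graphs). The key design freedom is that we get to choose the single formula $\psi_0(x,y)$ once and for all, and we are allowed to use vertex labels on~$G$; combining these two lets us ``type'' regions of $G$ and $H$ so that distinct pieces of the interpretation follow different local rules.

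First, I would choose $\psi_0(x,y)$ as a disjunction of several labeled clauses, e.g.\ of the shape
\[
\psi_0(x,y) \,\equiv\, \bigvee_i \big(L_i(x)\wedge L_i'(y)\wedge \pi_i(x,y)\big),
\]
where each $\pi_i(x,y)$ is a simple local pattern such as $edge(x,y)$, or $\exists z\,(edge(x,z)\wedge edge(z,y)\wedge M(z))$. Different label combinations therefore trigger different ``interpretation rules'' in different parts of the graph, while $\psi_0$ remains a fixed FO formula. This is reminiscent of, but strictly different from, the plain squaring formula $\psi_s$ appealed to in~\cite{mms94}; because \cite{mms94} produces preimages of unbounded degree, we must cook up $\psi_0$ more carefully so that degree-$3$ preimages still suffice to encode SAT or a cubic NP-hard problem.

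Next, given an instance $\Phi$ of the chosen NP-hard problem (say a bounded-occurrence 3-CNF formula), I would construct $H_\Phi$ in polynomial time from two kinds of gadgets: (i) a \emph{variable gadget} $X_v$ whose degree-$3$ preimages under $\psi_0$ come in exactly two ``flavours'' (corresponding to the truth values of $v$), and (ii) a \emph{clause gadget} $X_C$ that admits a degree-$3$ preimage iff at least one of the incoming variable-gadget flavours is \emph{true}. The labels fixed in $\psi_0$ are used to glue gadgets together along ``wire'' pieces that propagate the flavour faithfully, while the degree-$3$ restriction on $G$ forces each wire to encode a single bit. Correctness then has the standard two directions: from a satisfying assignment one builds an explicit cubic labeled $G$ whose interpretation under $\psi_0$ is exactly $H_\Phi$; conversely, any cubic $G$ with $I_{\psi_0}(G)=H_\Phi$ must respect the gadget decomposition locally, and reading off the variable-gadget flavours yields a satisfying assignment.

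The main obstacle is the tension between two conflicting requirements. The formula $\psi_0$ must be expressive enough to let $H_\Phi$ encode arbitrary SAT instances, yet every legal preimage $G$ must have maximum degree~$3$, so we cannot use large stars or cliques as switches in $G$. Thus the variable gadget must be carefully designed so that (a) both truth values are realizable as cubic preimages of the \emph{same} subgraph of $H_\Phi$, and (b) no ``spurious'' third flavour is realizable, and (c) the choice is rigidly propagated through wires that themselves respect the degree bound. Analogous cubic-graph gadget constructions in the literature (Holyer-style gadgets, or the constructions behind cubic planar 3-SAT) provide templates; adapting one of these to our labeled interpretation setting, and verifying the two implications of the reduction, completes the proof. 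The remaining verification is routine local case analysis on the gadgets.
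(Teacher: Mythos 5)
Your high-level strategy — a gadget-by-gadget reduction from an NP-complete problem, with a fixed formula $\psi_0(x,y)$ that applies ``different local rules in different regions'' so that degree-$3$ preimages must encode combinatorial choices — is the same idea the paper uses. But your write-up stops short of being a proof in two significant ways.

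First, and most importantly, the gadgets are never constructed, and that is precisely the hard content of the theorem. You assert that there is a variable gadget $X_v$ whose degree-$3$ preimages ``come in exactly two flavours'' and a clause gadget $X_C$ that ``admits a degree-$3$ preimage iff at least one incoming flavour is true,'' and conclude that ``the remaining verification is routine local case analysis.'' This is exactly backwards: the whole difficulty is designing concrete gadgets and a concrete $\psi_0$ for which these rigidity claims actually hold in the presence of the degree-$3$ constraint and of a single fixed interpretation formula. The paper does supply such constructions (vertex and edge gadgets with optional ``dashed'' edges, so that $H$ contains the union of all candidate preimages; the formula then forces each reduced gadget to pick a consistent option), together with the local case analysis showing no spurious preimages exist. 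Without an analogue, your argument is a plan, not a proof.

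Second, your $\psi_0$ relies on vertex labels $L_i, L_i', M$ on the hidden graph $G$, and the preimage search is then over both the graph structure of $G$ and its labeling. You never address the possibility that the extra freedom in choosing the labeling lets an algorithm (or an adversary exhibiting a preimage) ``cheat'' — for instance, by choosing labels that steer which disjuncts of $\psi_0$ fire in order to reproduce $E(H)$ from an unintended cubic $G$. This is not obviously fatal, but it is an additional rigidity requirement on your gadgets that you do not even mention. The paper sidesteps it entirely: its $\psi_0$ is label-free, encoding the ``which rule applies here'' information by purely structural markers (vertices adjacent to exactly one, respectively two, degree-one vertices) and by local isomorphism types of bounded-radius neighbourhoods, so the only degrees of freedom in $G$ are graph-theoretic and are pinned down by the gadgets. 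The paper also reduces from $3$-colouring of $4$-regular graphs rather than from SAT, which lets the vertex gadget be a single small piece with three reduced versions (one per colour) and the edge gadget be a single ``inequality checker,'' avoiding the variable/clause asymmetry that a SAT reduction would force you to engineer.

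In short: the approach is in the right spirit, but the proposal as written has a genuine gap (no gadgets, no $\psi_0$, no case analysis) and an unexamined complication (label freedom in $G$). To turn it into a proof you would either have to make the labeled-formula route rigid, or, as the paper does, drop labels in favour of structural markers and exhibit concrete gadgets whose degree-$3$ preimages are provably in bijection with the intended combinatorial objects.
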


\begin{proof}
We reduce from the folklore NP-hard problem of $3$-colouring
a given $4$-regular graph~$H_0$.
We construct a graph $H$ from an arbitrary $4$-regular graph $H_0$ as follows:
\begin{itemize}
\item Every vertex $v$ of $H_0$ is replaced with a graph $T_v$ which is a
copy of the graph in Figure~\ref{fig:vertexgadget} including the dashed edges.
\item Every edge $e$ of $H_0$ is replaced with a graph $U_e$ which is a
copy of the graph in Figure~\ref{fig:edgegadget} including the dashed edges.
\item For every edge $e=\{u,v\}$ of $H_0$, the terminal $e^1$ of $U_e$ is
identified with $u^i$ of $T_u$, and $e^2$ of $U_e$ is identified with $v^j$ of $T_v$,
where $e$ is the $i$-th edge at $u$ and the $j$-th edge at~$v$
(for arbitrarily chosen orderings of edges incident to $u,v$).
\end{itemize}
The construction of $H$ is independent of whether $H_0$ is $3$-colourable.
Note that since $U_e$ contains a vertex of degree~$5$, it is
$H\not\in\mathcal{D}_3$.

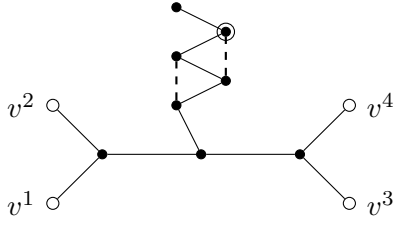
\begin{figure}[t]
$$
\begin{tikzpicture}[scale=0.65]
\tikzstyle{every node}=[draw, shape=circle, inner sep=1.6pt, fill=white]
\node[label=left:$v^1$] (v1) at (0,0) {} ; 
\node[label=left:$v^2$] (v2) at (0,2) {} ; 
\node[label=right:$v^3$] (v3) at (6,0) {} ; 
\node[label=right:$v^4$] (v4) at (6,2) {} ; 
\tikzstyle{every node}=[draw, shape=circle, inner sep=1.2pt, fill=black]
\node (w) at (1,1) {} ; \node (ww) at (5,1) {} ;
\draw (v1) -- (w) -- (v2) ; \draw (v3) -- (ww) -- (v4) ;
\node (t0) at (3,1) {} ;
\draw (w) -- (t0) -- (ww) ;
\node (t1) at (2.5,2) {} ; \node (t2) at (3.5,2.5) {} ;
\node (t3) at (2.5,3) {} ; \node (t4) at (3.5,3.5) {} ;
\node (t5) at (2.5,4) {} ;\draw (t4) circle (5pt) ;
\draw (t0) -- (t1) -- (t2) -- (t3) -- (t4) -- (t5) ;
\draw[dashed, thick] (t1) --(t3) ;
\draw[dashed, thick] (t2) --(t4) ;
\end{tikzpicture}
$$
\caption{The vertex gadget $T_v$ in the proof of
Theorem~\ref{thm:recognizeinthard}.}
\label{fig:vertexgadget}
\end{figure}
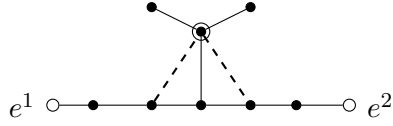
\begin{figure}[t]
$$
\begin{tikzpicture}[scale=0.65]
\tikzstyle{every node}=[draw, shape=circle, inner sep=1.6pt, fill=white]
\node[label=left:$e^1$] (e1) at (0,0) {} ; 
\node[label=right:$e^2$] (e2) at (6,0) {} ; 
\tikzstyle{every node}=[draw, shape=circle, inner sep=1.2pt, fill=black]
\node (c) at (3,0) {} ;
\node (d1) at (2,0) {} ; \node (d2) at (4,0) {} ;
\draw (e1) -- node (x) {} (1.5,0) -- (d1) -- (c)
	 -- (d2) -- node (x) {} (5.75,0) -- (e2) ; 
\node (c1) at (3,1.5) {} ; \draw (c1) circle (5pt) ;
\node (c2) at (2,2) {} ; \node (c3) at (4,2) {} ;
\draw (c) -- (c1) -- (c2) ; \draw (c1) -- (c3) ;
\draw[dashed, thick] (d1) --(c1) -- (d2) ;
\end{tikzpicture}
$$
\caption{The edge gadget $U_e$ in the proof of
Theorem~\ref{thm:recognizeinthard}.}
\label{fig:edgegadget}
\end{figure}

Before defining the formula $\psi_0$, we briefly explain the underlying idea
of the reduction.
For a suitable subgraph $G$ of  $H$ (on the same vertex set),
we would like to have $H=I_{\psi_0}(G)$ if and only if
every vertex gadget (of a vertex of $H_0$) restricted to $G$ encodes one of three
available colours (for this vertex in~$H_0$), 
and every edge gadget in $G$ ``verifies'' that the ends
of the edge (in $H_0$) receive distinct colours.

The above rough sketch is made precise now.
Considering colours $1,2,3$, we define three {\em reduced vertex gadgets}
of a vertex $v\in V(H_0)$
as $T_v^1=T_v$ and $T_v^2,T_v^3$ obtained from $T_v$ by removing one 
or the other dashed edge of~$T_v$ in Figure~\ref{fig:vertexgadget}.
Similarly, a {\em reduced edge gadget} $U_e'$ of an edge $e\in E(H_0)$
is obtained from $U_e$ in Figure~\ref{fig:edgegadget} by removing both dashed edges.
Assuming any $3$-colouring $c:V(H_0)\to\{1,2,3\}$,
we construct a graph $G\in\mathcal{D}_3$ analogously to the above construction of
$H$, while replacing every vertex $v\in V(H_0)$ with $T_v^{c(v)}$\
and every edge  $e\in E(H_0)$ with $U_e'$.

Note that $G\subset H$.
We call a vertex $w$ a {\em v-marker} if $w$ is adjacent to precisely
one vertex of degree $1$, and we call $w$ an {\em e-marker} if $w$ is
adjacent to two vertices of degree $1$
(see the circled vertices in Figures~\ref{fig:vertexgadget}
and~\ref{fig:edgegadget}, respectively).
Then every e-marker $w$ of~$G$ belongs to some $U_e'$ of $e=\{u,v\}\in E(H_0)$,
and there are precisely two v-markers of~$G$ at distance $9$ from~$w$
belonging to $T^i_u$ and to $T^j_v$.
We would now like to ``verify'' that the colouring $c$ is proper, i.e. that~$i\not=j$,
in the formula~$\psi_0$.

We define $\psi_0(x,y)\equiv edge(x,y)\vee \nu(x,y)\vee \eta(x,y)$ where
\begin{itemize}
\item $\nu(x,y)$ asserts that there exists $z$ which is a neighbour of $x$
or~$y$, such that $z$ is a v-marker and the $5$-neighbourhood of $z$ is
isomorphic to one of $T_v^1,T_v^2,T_v^3$, and that $x,y$ are the ends of one
of the dashed edges in Figure~\ref{fig:vertexgadget};
\item $\eta(x,y)$ asserts that one of $x,y$, say $x$, is an e-marker, 
$y$ is at distance two from $x$, and the following holds:
there exist vertices $z,z'$ at distance $9$ from $x$ such that $z,z'$
are v-markers with their $5$-neighbourhoods isomorphic to $T_v^i$
and $T_v^j$ where~$i\not=j$.
\end{itemize}
It is routine to rewrite the above description into an FO formula.

Clearly, $H=I_{\psi_0}(G)$ if and only if the above colouring $c$ is proper.
Conversely, it remains to prove that if $H=I_{\psi_0}(G)$ for any
$G\in\mathcal{D}_3$, then $H_0$ is $3$-colourable.
Notice that $G\subseteq H$ and that the formula $\psi_0$ does not ``add''
edges to degree-$1$ vertices, and so the degree-$1$ vertices of $G$ must be
in a one-to-one correspondence with the v-marker and e-marker vertices of~$H$.

Fix an e-marker $w$ belonging to $U_e\subseteq H$.
Since $w$ is of degree $5$ in $H$ and of degree $\leq3$ in $G\in\mathcal{D}_3$,
it is $G\models\psi_0(w,t)$ for some (actually, at least two) neighbour $t$ of~$w$ in~$H$.
In particular, by the definition of $\eta(w,t)$, this means there exist 
two v-markers $w',w''$ at distance $9$ from~$w$ in~$G$.
From the construction of $H$ we know that $w',w''$ belong to $T_u,T_v$,
respectively, where $u,v$ are the ends of $e$ in $H_0$.
Again by $G\models\psi_0(w,t)$, the subgraph of $G$ induced by $V(T_u)$
is one of $T_u^1,T_u^2,T_u^3$, say it is $T_u^i$.
Similarly, the subgraph of $G$ induced by $V(T_v)$ is, say, $T_v^j$
and $i\not=j$.
Since the same holds for any edge of $H_0$, an (arbitrary) graph
$G\in\mathcal{D}_3$
such that $H=I_{\psi_0}(G)$ indeed encodes a proper $3$-colouring of~$H_0$.
\end{proof}

\section{Questions and open problems}
Our interpretation approach and obtained results open several natural 
questions which we believe are worth further investigation.
We list them in this last section of the paper.
\begin{enumerate}[1.]
\item 
Can one characterize under which conditions on a formula 
$\psi(x,y)$ and a graph class $\ca C$, the following holds?
Given a graph $H \in I_\psi(\mathcal{C})$ as an input,
it would be possible to compute in polynomial 
(or in FPT with respect to $\psi$ and $\ca C$)
time a graph $G \in \mathcal{C}$ such that $H = \psi(G)$.
We know both of positive and negative examples
(Theorems~\ref{thm:decomposition} and~\ref{thm:recognizeinthard}),
but any plausible conjecture seems now out of reach.

\item 
It is easy to generalize the notion of near-$k$-twins $u,v$ in such a way that 
it would measure not the size of the symmetric difference between
the neighbourhoods, $|N(u)\symdiff N(v)|$,
but structural properties of the subgraph induced on $N(u)\symdiff N(v)$. 
For example, we may define a {\em near-$sd_k$-twin relation}, 
in which two vertices $u,v$ would be near-$sd_k$-twins 
if the subgraph induced on $N(u)\symdiff N(v)$
has shrub-depth at most $k$ (see \cite{ganianetal12} for the definition of
shrub-depth). 
One may then consider graph classes where the near-$sd_k$-twin relation is an equivalence. 
Is there an FPT algorithm for FO model checking on such graph classes?

\item \label{it:third}
Is it possible to extend our results to graph classes interpretable
in more general sparse graph classes?
For example, what is a characterization of graph classes interpretable in
trees or in planar graphs? In graph classes of bounded expansion?
Are there FPT algorithms for FO model checking on such classes?

\item
In relation to the previous point, we know from Corollary~\ref{cor:robust}
that the notion of near-uniform graph classes is robust under FO
interpretations and transductions.
We know of (at least) two other examples of such behaviour --
the graph classes of bounded clique-width~\cite{co00} and
the graph classes of bounded shrub-depth~\cite{ganianetal12}
(which are robust even under MSO transductions).
Can one come up with other natural and interesting graph properties
defining graph classes robust under FO transductions?

\item\label{it:fifth}
Inspired by the classification of sparse graph classes
by Ne\v{s}et\v{r}il and Ossona de Mendez \cite{NOdM12},
we may investigate graph classes $\ca D$ with the property that, for 
every FO formula $\psi(x,y)$ there exists a graph $F_\psi$
(as ``forbidden'') such that $F_\psi$ is not
present as an induced subgraph in any member of $I_\psi(\ca D)$.
This logical definition may be considered in analogy to the structural 
definition(s) of nowhere dense classes~\cite{NOdM12}
(as ``nowhere FO dense'').
What can we say about complexity of FO model checking on such classes
$\ca D$?
\end{enumerate}

To conclude, we make the following two explicit conjectures related to
points \ref{it:third} and \ref{it:fifth} of the discussion.
\begin{conj}
Let $\mathcal{C}$ be a nowhere dense graph class and $\mathcal{D}$ a 
graph class FO interpretable in $\mathcal{C}$. 
Then $\mathcal{D}$ has an FPT algorithm for FO model checking.
\end{conj}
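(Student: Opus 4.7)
The plan is to mirror the three-step strategy that succeeded for the bounded-degree case, replacing each ingredient by its nowhere-dense analogue. First, I would fix the interpretation formula $\psi(x,y)$ and use Gaifman's locality theorem, exactly as in Corollary~\ref{cor:our_gaifman}, to extract constants $r$ and $\rho$ such that, for any $G\in\mathcal{C}$ and $u,v\in V(G)$ with $\mathrm{dist}_G(u,v)>2r$, the truth of $\psi(u,v)$ depends only on the local $(\rho,r)$-types of $u$ and $v$. Since there are only finitely many such types, one obtains, as in Theorem~\ref{thm:characterization}, a bounded number of ``type classes'' in the hidden preimage $G$ of any $H\in\mathcal{D}$. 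This already yields a coarse structural fingerprint of $H$ that the algorithm can try to recover from $H$ alone.

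Second, I would try to build a ``sparse witness'' graph $G_H$ from $H$ playing the role of the bounded-degree $G_H$ in Theorem~\ref{thm:decomposition}. The idea would be: colour vertices of $H$ by an appropriate refinement of their $(\rho,r)$-types in the (unknown) preimage, for each pair of type labels decide whether ``generic'' far-apart vertices of these labels are $H$-adjacent, and then record the exceptional pairs together with all pairs of small Gaifman distance in $G_H$. The hope is that $G_H$ inherits nowhere denseness from $G\in\mathcal{C}$, perhaps after adding a bounded number of unary and binary labels, and that a universal formula $\psi'$ depending only on $\psi$ and $\mathcal{C}$ interprets $H$ in $G_H$. One could then apply the Grohe--Kreutzer--Siebertz algorithm~\cite{gks14} to $(G_H,\phi')$, where $\phi'$ is obtained from the input $\phi$ by substituting $\psi'$ for $\mathit{edge}$.

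The principal obstacle is that, in contrast with the bounded-degree setting, there is no a priori upper bound on $|N^H(u)\symdiff N^H(v)|$ for same-type vertices $u,v$: the $r$-balls in the preimage $G$ can be arbitrarily large, so the clean near-$k$-twin machinery of Section~\ref{sec:near-uniform} collapses. What one needs is a structural replacement for near-$k$-twins that tolerates unbounded symmetric differences but controls their shape; candidates include requiring $N^H(u)\symdiff N^H(v)$ to be covered by few vertices in the sense of uniform quasi-wideness, or to have bounded shrub-depth, or to be reachable by a bounded flip from a sparse pattern. A second, equally serious obstacle is algorithmic: even if the right ``type-coloured sparse skeleton'' exists, it must be computed from $H$ alone, without access to $G$. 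This suggests an iterative type-refinement in the spirit of partition refinement, using the nowhere-dense property of the hidden $\mathcal{C}$ (for instance, the splitter game or the low tree-depth colouring framework) to argue that each refinement step prunes candidate equivalence cells efficiently and that the process stabilizes after a bounded number of rounds.

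I expect the hard step to be not the logical reduction to a nowhere-dense instance, which is essentially forced by Gaifman, but the combinatorial characterization of what ``sparse skeleton'' captures an FO interpretation of a nowhere-dense class; this is the direct analogue of Theorem~\ref{thm:characterization} in the nowhere-dense regime, and I view its resolution as the real content of the conjecture. Any successful attack will likely need to exploit stability-theoretic tools (FO interpretations of nowhere-dense classes are stable as abstract structures) together with the quasi-wideness characterization of nowhere denseness, rather than the raw locality argument that suffices in the bounded-degree case.
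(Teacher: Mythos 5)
This statement is labelled as a \emph{conjecture} in the paper, and the paper does not prove it; it is posed as an open problem. Your proposal is likewise not a proof but an honest research plan, and you yourself flag the two places where the bounded-degree argument breaks down. Your diagnosis is accurate and worth stating clearly: Corollary~\ref{cor:our_gaifman} (Gaifman locality for the interpretation formula) does carry over verbatim, since it is formula-level and makes no sparsity assumption, and Theorem~\ref{thm:characterization} would indeed be the target statement to generalize. But the quantitative step there is $\ell = 2d^r$ --- the bound on the symmetric difference of $H$-neighbourhoods of same-type vertices is exactly the size of two $r$-balls in $G$, and in a nowhere dense (or even bounded-expansion, or planar) class $r$-balls have no uniform size bound. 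So the near-$k$-twin/near-covered framework of Section~\ref{sec:near-uniform} genuinely does not apply, and no small tweak rescues it; a qualitatively different structural notion is needed. Your suggested candidates (controlling the \emph{shape} rather than the size of $N^H(u)\symdiff N^H(v)$, stability-theoretic tools, quasi-wideness) are reasonable directions, and the paper's own Open Problem~3 and the conjecture itself acknowledge that this is open territory. In short: the proposal does not close the gap, but it correctly locates it --- the missing piece is precisely a nowhere-dense analogue of Theorem~\ref{thm:characterization}, plus an algorithm to recover the sparse skeleton from $H$ alone, and you should present this as a plan rather than a proof.
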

\begin{conj}[``Nowhere FO dense'']\label{conj:noFOdense}
Let $\mathcal{D}$ be a graph class with the following property:
for every FO formula $\psi(x,y)$ there exists a graph $F_\psi$
such that $F_\psi$ is not an induced subgraph of any member of $I_\psi(\ca D)$.
Then $\mathcal{D}$ has an FPT algorithm for FO model checking.
\end{conj}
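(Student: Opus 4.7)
The plan is to attempt an analog of the Grohe--Kreutzer--Siebertz framework for nowhere dense classes, but driven by this purely logical ``sparsity'' hypothesis rather than by shallow minors. A natural first step is to show that the family of nowhere FO dense classes is closed under FO transductions: if $\tau$ is any transduction and $\mathcal{D}$ is nowhere FO dense, then $\tau(\mathcal{D})$ is nowhere FO dense as well, because for any $\psi$ the composite $I_\psi \circ \tau$ is itself realized by an FO interpretation applied to a slightly richer signature, and the hypothesis on $\mathcal{D}$ then forbids some $F$ as an induced subgraph in $I_\psi(\tau(\mathcal{D}))$. This closure property would already be a very useful tool, mirroring the monotone-closure role of shallow minors in the sparse theory and letting us move freely between the class and its logically-defined expansions.

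Second, I would try to extract algorithmic structure from the hypothesis by specializing $\psi(x,y)$ to natural ``test'' formulas. Choosing $\psi_r(x,y) \equiv \mathrm{dist}(x,y)\le r$ for each radius $r$ forbids a fixed clique in the $r$-th distance power of every $G\in\mathcal{D}$, giving an analog of nowhere-denseness at every radius; choosing $\psi$ to code bicliques or half-graphs would force bounded VC-dimension and an analog of the stability property. Combining such uniform forbidden patterns, one would then hope to build low-complexity neighbourhood covers for members of $\mathcal{D}$, and finally run a Gaifman-style locality reduction via Corollary~\ref{cor:our_gaifman}: each basic local sentence is evaluated by enumerating $r$-neighbourhoods whose local $(\rho,r)$-types come in boundedly many kinds thanks to the forbidden-pattern assumption, and each local evaluation is handled recursively inside the cover.

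The main obstacle, and presumably the reason this is stated only as a conjecture, is that the hypothesis is purely logical and not obviously combinatorial. Nowhere dense classes admit several rich equivalent characterizations (uniform quasi-wideness, low tree-depth colourings, the splitter game) that drive the existing algorithm, and no such combinatorial counterpart is currently known for nowhere FO dense classes. In particular, it is unclear whether such a class must be transducible into a class of bounded expansion, or be near-uniform in the sense of Definition~\ref{def:near-uniform}, or admit any bounded-index decomposition that an algorithm could exploit; the test-formula argument above produces infinitely many forbidden patterns but no uniform parameter to run a recursion on. A realistic attack therefore almost certainly has to go through a genuinely new structure theorem, possibly inspired by stable and NIP graph classes in model theory, and constructing that structural counterpart to the logical hypothesis is the step I expect to be by far the hardest.
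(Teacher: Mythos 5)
This statement is Conjecture~\ref{conj:noFOdense} in the paper, not a theorem: the authors offer no proof, only the conjecture itself together with a short remark that one cannot strengthen its conclusion to ``$\mathcal D$ is FO interpretable in some nowhere dense class'' (bounded clique-width graphs, e.g.\ cographs, give a counterexample). You correctly recognize this and, rather than claiming a proof, sketch a plausible programme of attack and honestly identify the missing structure theorem as the main obstacle. That assessment agrees with the paper's own stance.

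One technical caveat on your first step. You propose to prove that nowhere-FO-dense classes are closed under FO transductions by observing that $I_\psi\circ\tau$ ``is itself realized by an FO interpretation applied to a slightly richer signature.'' That is not quite right as stated: the hypothesis of the conjecture quantifies only over interpretations $I_\psi$ (which keep the vertex set fixed), whereas a transduction $\tau$ may copy the domain and add label parameters, so the composite $I_\psi\circ\tau$ is again a transduction rather than an interpretation. The assumed property of $\mathcal D$ therefore does not directly yield a forbidden induced subgraph for $I_\psi(\tau(\mathcal D))$. To make this closure go through you would either have to strengthen the hypothesis to quantify over transductions, or argue (in the spirit of the construction in the proof of Theorem~\ref{thm:transductiond}) that the copying and labelling can be absorbed into the class $\mathcal D$ without leaving the nowhere-FO-dense regime --- which is itself a nontrivial claim. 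The rest of your discussion (specializing $\psi$ to distance powers, bicliques, and half-graphs; the lack of a combinatorial counterpart to the logical hypothesis) is a fair description of why this remains open.
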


Regarding Conjecture~\ref{conj:noFOdense}, it is tempting to strengthen its
conclusion to; `then $\ca D$ is FO interpretable in some nowhere
dense graph class', but that actually fails.
For example, take the class $\mathcal{D}$ of graphs of clique-width~$2$.
Then, for every FO formula $\psi(x,y)$, the interpreted class 
$I_\psi(\ca D)$ is of bounded clique-width, too,
and so a forbidden graph $F_\psi$ always exists in this case.
However, the class $\mathcal{D}$ is not interpretable in any nowhere
dense graph class.


%
%
%


\bibliographystyle{abbrv}

\end{document}